\DeclareFontFamily{U}{rsf}{}
\DeclareFontShape{U}{rsf}{m}{n}{
  <5> <6> rsfs5 <7> <8> <9> rsfs7 <10-> rsfs10}{}
\DeclareMathAlphabet\Scr{U}{rsf}{m}{n}
\def\GUL{\GU(1)_{\text{L}}}
\def\GUR{\GU(1)_{\text{R}}}
\def\C{{\mathbb C}}
\def\Q{{\mathbb Q}}
\def\Z{{\mathbb Z}}
\def\deg{\operatorname{deg}}
\def\SO{\operatorname{SO}}
\def\GU{\operatorname{U{}}}
\def\p{\partial}
\def\la{\langle}
\def\ra{\rangle}
\def\ff#1#2{{\textstyle\frac{#1}{#2}}}
\def\cN{{\cal N}}
\newcommand\thetab{\overline{\theta}}
\newcommand\cb{\overline{c}}
\newcommand\zb{\overline{z}}
\newcommand\Zh{\widehat{Z}}
\newcommand\Xb{\overline{X}}
\newcommand\Wt{\widetilde{W}}
\newtheorem{thm}{Theorem}[section]
\newtheorem{lem}[thm]{Lemma}
\newtheorem{propo}[thm]{Proposition}
\theoremstyle{definition}
\newtheorem{defn}[thm]{Definition}
\tikzset{>=stealth}
\tikzset{every picture/.style={very thick}}
\def\bW{{{\boldsymbol{W}}}}
\def\bM{{\boldsymbol{{M}}}}
\def\qtot{{{q_{\text{tot}}}}}
\def\M#1{{{\frac{1}{m_{#1}}}}}
\title{Landau--Ginzburg skeletons}
\author {Ian C.~Davenport and Ilarion V.~Melnikov}
\affiliation{Department of Physics and Astronomy,\\
James Madison University, Harrisonburg, VA 22807, USA}
\emailAdd{davenpic@dukes.jmu.edu, melnikix@jmu.edu}
\abstract{We study the class of indecomposable two-dimensional Landau-Ginzburg theories with (2,2) supersymmetry and central charge $c < 6$ with the aim of classifying all such theories up to marginal deformations.  Our results include cases overlooked in previous classifications.  The results are rigorous for three or fewer fields and more generally are rigorous if we assume an extra bound.  Numerics suggest that we have the complete set of indecomposable Landau-Ginzburg families with $c<6$.  This set consists of $38$ infinite families and a finite list of $418$ sporadic cases.  The basic tools are classic results of Kreuzer and Skarke on quasi--homogeneous isolated singularities and solutions to certain
feasibility integer programming problems.  }
\begin{document}

\maketitle

\section{Introduction}\label{s:intro}
Landau-Ginzburg (LG) theories have played an important role in the study of critical phenomena.  This is especially true in the case of two dimensions, where they yield Lagrangian theories that flow to a wide class of interacting IR fixed points.  There has been important work on LG theories with various amounts of supersymmetry.  For instance, the seminal work of~\cite{Zamolodchikov:1986db} pointed out the connection between minimal models and LG theories; theories with (1,1) supersymmetry were discussed in~\cite{Kastor:1988ef}; theories with (2,2) supersymmetry were introduced in~\cite{Martinec:1988zu,Vafa:1988uu}.  The defining data of (2,2) LG theories is intimately related to the mathematics of isolated quasi--homogeneous hypersurface singularities, a subject studied for many years, going back to, for example, the classic work~\cite{Milnor:1970or}.  The graphical techniques of the sort we will be using go back to~\cite{Arnold:1985gz}.\footnote{A comparative history of these mathematical developments is presented in~\cite{Hertling:2012ku}, where also a number of interesting classification results are obtained, focusing on classifications at fixed Milnor number.}  

The purpose of this work is to describe the combinatorics of (2,2) LG theories and classify indecomposable theories with central charge $c < 6$.  Our interest in this question is threefold.  First, a rich correspondence between these LG theories and  $\cN = 2$, $d=4$ superconformal field theories was proposed in~\cite{DelZotto:2015rca}.  It has been investigated in a geometric setting of rational Gorenstein singularities in~\cite{Xie:2015rpa,Chen:2016bzh,Wang:2016yha} based on the earlier~\cite{Yau:2003sst}.  Our work generalizes those studies.  We find several isolated rational Gorenstein hypersurfaces singularities in $\C^4$ overlooked in~\cite{Yau:2003sst}, and, more generally, we also describe LG theories that do not correspond to hypersurface singularities in $\C^4$.  Second, the $c<6$ boundary is interesting since theories with $c<6$ can arise in descriptions of various singular limits of $(4,4)$ superconformal theories with $c=6$.  We have in mind here the kinds of limits described in~\cite{Giveon:1999zm}.  Finally, a good understanding of the (2,2) combinatorics should serve as a springboard for efforts to classify (0,2) LG theories.  Although these were introduced and studied many years ago in~\cite{Distler:1993mk,Kawai:1994qy}, we have a very incomplete understanding and no classification even at fixed central charge!

The tools we use were developed for the purpose of classifying (2,2) LG theories at fixed central charge;  $c=9$ theories were classified in~\cite{Kreuzer:2002uu,Kreuzer:1992np,Klemm:1992bx}, and $c=12$ theories in~\cite{Lynker:1998pb}.\footnote{These classification results are available at the Kreuzer Calabi-Yau database:  \url{http://hep.itp.tuwien.ac.at/~kreuzer/CY/}.}  We will review them, specifically the elegant approach developed in~\cite{Kreuzer:1992np}, in the next section and in section~\ref{s:links}, and then we will turn to the $c<6$ classification problem in sections~\ref{s:skeletons} and~\ref{s:pointing}.  We end with a summary of our findings.

\section*{Acknowledgements}  IVM would like to thank M. Del Zotto for sparking his interest in the question.  The work of ICD was supported in part through the Tickle Foundation, and the work of IVM was supported in part through the Faculty Assistance Grant from the College of Science and Mathematics at James Madison University.

\section{Quasi-homogeneous combinatorics} \label{s:combinatorics}
A (2,2) LG theory is a theory specified by a UV Lagrangian for $n$ chiral superfields $X_i$ with canonical quadratic kinetic terms and polynomial superpotential interactions:
\begin{align}
S = \int d^2 z d^2\theta d^2\thetab \sum_{i=1}^n \Xb_i X_i +  \left\{ \int d^2 z d\theta^+d\thetab^+~ W(X) + \text{h.c.} \right\}~.
\end{align}
Here $(z,\zb)$ are coordinates on the Euclidean world-sheet, and $\theta^\pm$, $\thetab^\pm$ are the corresponding superspace coordinates.  The $\theta^\pm$ ($\thetab^\pm$) carry $\GUL$ ($\GUR$) charges $\pm 1$, and $d^2z d\theta^+d \thetab^+$ is the chiral superspace measure.  We demand that the interactions preserve a $\GUL\times\GUR$ R-symmetry, which requires the superpotential $W(X)$ to be a quasi--homogeneous polynomial.  That is there are non-negative rational charges $q_i \in \Q$ such that for any $t \in \C^\ast$
\begin{align}
W(t_i^{q_i} X_i) = t W(X)~.
\end{align}
The theory has a supersymmetric RG flow that preserves the $\GUL\times\GUR$ symmetry and flows to a strongly coupled theory in the IR.   We will make the standard assumption that the $\GUL\times\GUR$ symmetries correspond to the left and right symmetries of a (2,2) SCFT.  In particular, the chiral field $X_i$ will flow to a chiral primary (more precisely (c,c) ) operator with $\GUL\times\GUR$ charge $(q_i,q_i)$.  Since the theory is left-right symmetric, in what follows we will simply refer to $q_i$ as the R-charge of $X_i$.

 We will also require that the theory has a normalizable vacuum.  We will assume this is equivalent to the scalar potential of the LG theory having an isolated vacuum, i.e. $W(X)$ is \textit{compact}.  
\begin{defn}
Let $W_i = \frac{\p W}{\p X_i}$.  We say that $W(X)$ is \textit{compact} if and only if $X_1 = X_2 = \cdots = X_n = 0$ is the unique solution to $\{W_1 = 0, W_2 = 0,\ldots, W_n = 0\} \in \C^n$; in other words, $W$ has a unique critical point.
\end{defn} 
The following are equivalent:
\begin{enumerate}
\item $W(X)$ is compact;
\item  the ideal $\bW = \la W_1, W_2,\ldots, W_n\ra \subset \C[X_1,X_2,\ldots, X_n]$  is zero-dimensional;
\item the Jacobian ring $R_{W} = \frac{\C[X_1,\ldots,X_n] }{ \bW}$ is finite dimensional as a vector space over $\C$;
\item the determinant of the Hessian matrix of $W(X)$ is non-zero in $R_W$.
\end{enumerate}
These equivalences are well-known; see, for instance,~\cite{Lerche:1989uy}.

With our assumptions the central charges of the (2,2) SCFT are determined by the anomalies of the UV $\GUL\times\GUR$ symmetries: they are equal and given by
\begin{align}
c_L = c_R = c = 3 \sum_{i=1}^n (1-2 q_i)~.
\end{align}
The classification of LG theories at a fixed $c$ amounts to finding all $n$ and charges $(q_1,\ldots,q_n) \in \Q^n$ that yield the desired $c$ and admit a compact superpotential.  While we will not use this result, we note~\cite{Kreuzer:1992np} that for any fixed rational $c$ there is a finite set of $(q_1,\ldots, q_n) \in \Q^n$ that yield compact theories. 

It may be that a compact $W$ is \textit{decomposable}, i.e. it can be written as
\begin{align}
W = W^{(1)}(X_1,\ldots, X_k) + W^{(2)}(X_{k+1},\ldots, X_{n})~.
\end{align}
In that case $W^{(1)}$ and $W^{(2)}$ both define compact LG theories with, respectively, $k$ and $n-k$ variables.  Since these are clearly non-interacting, the IR fixed point will be a product of the two SCFTs, and in our classification of LG families we may as well ignore such decomposable theories.\footnote{Note that a compact LG theory has additional global symmetry if and only if it is decomposable~\cite{Bertolini:2014ela}.}  In the graphical notation that we will introduce shortly this will amount to ignoring disconnected graphs.

\subsection{Constraints on the charges}
To describe the classification further, we will need to introduce some more notation and results.  Most of these are discussed in~\cite{Kreuzer:1992np}, and we include them here for completeness and reference.  Throughout, $W$ will be a  polynomial quasi--homogeneous superpotential, i.e. it takes the form
\begin{align}
W = \sum_{p \in \Delta \cap \Z^n} a_p \prod_{i=1}^n X_i^{p_i}~,
\end{align}
where the $a_p$ are complex coefficients, and $\Delta$ is the Newton polytope for $W$.  Because $W$ is polynomial and quasi-homogeneous, $\Delta$ lies in the positive orthant  intersected with the hyperplane $\sum_{i=1}^n q_i p_i = 1$ .  Thus, the charges $(q_1,\ldots,q_n)$ specify a family of LG theories: the coefficients $a_p$ modulo the action of holomorphic field redefinitions correspond to marginal deformations of the theory.  We will say that the family determined by $(q_1,\ldots,q_n)$ is \textit{generically compact} (or \textit{gc} for short) if $W$ is compact for a generic choice of the coefficients $a_p$.  For a gc family the basic properties of the SCFT, such as the Zamolodchikov metric and three-point functions vary smoothly with these parameters away from a complex co-dimension one singular locus where $W$ fails to be compact.

\begin{defn} For a fixed superpotential $W$  a variable $X_i$ is a \textit{root} if and only if $W$ contains the monomial  $X_i^{m_i}$.\footnote{In what follows we will use a short-hand for this:  $W \supset X_i^{m_i}$.}  In this case we call the integer $m_i$ the \textit{exponent of the root}.   A variable that is not a root is a \textit{pointer} if and only if  $W$ contains the monomial $X_i^{m_i} X_j$ with $j\neq i$; in this case we say  \textit{$X_i$ points at $X_j$} and has \textit{exponent} $m_i$.
\end{defn}
\begin{lem}
If $W$ is compact, then every variable is either a root or a pointer.  
\end{lem}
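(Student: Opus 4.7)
The plan is to prove the contrapositive: if some variable $X_i$ is neither a root nor a pointer, then $W$ has a critical locus of positive dimension, violating compactness.

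First I would introduce the candidate bad locus. Suppose $X_i$ is neither a root nor a pointer, and let $L_i \subset \C^n$ be the complex line defined by $X_j = 0$ for all $j \neq i$. I will show every point of $L_i$ is a critical point of $W$, which contradicts item 1 (uniqueness of the critical point at the origin).

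Next I compute each partial derivative restricted to $L_i$. For a monomial $m = \prod_k X_k^{p_k}$ appearing in $W$, the derivative $\partial_k m$ vanishes on $L_i$ unless $p_j = 0$ for every $j \notin \{i,k\}$ and $p_k \leq 1$ (if $k \neq i$) or $p_k$ arbitrary (if $k = i$). This means:
\begin{itemize}
\item $\partial_k W |_{L_i}$ for $k \neq i$ receives contributions only from monomials of $W$ of the form $X_i^{m} X_k$; by the assumption that $X_i$ is not a pointer, no such monomial is present, so $\partial_k W |_{L_i} = 0$.
\item $\partial_i W |_{L_i}$ receives contributions only from monomials of $W$ of the form $X_i^m$; by the assumption that $X_i$ is not a root, no such monomial is present, so $\partial_i W |_{L_i} = 0$.
\end{itemize}

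Hence every point of $L_i$ solves $\{W_1 = \cdots = W_n = 0\}$, giving a one-dimensional critical locus. This contradicts the compactness of $W$, completing the contrapositive. The argument is essentially a bookkeeping exercise on which monomials can survive differentiation when all variables except $X_i$ are set to zero; the only mild subtlety is making sure to treat the $k = i$ and $k \neq i$ cases correctly, and to notice that the definition of pointer covers exactly the monomials $X_i^m X_k$ that would otherwise contribute to $\partial_k W|_{L_i}$ for $k \neq i$. There is no real obstacle — the proof is forced by the definitions.
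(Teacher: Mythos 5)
Your proof is correct and is essentially the paper's own argument: the paper likewise observes that if $X_i$ is neither a root nor a pointer, every monomial of $dW$ vanishes on the coordinate line $\{X_j = 0,\ j \neq i\}$, so that line is a critical locus and $W$ is not compact. You simply spell out the monomial bookkeeping (which terms survive in $\partial_k W|_{L_i}$ for $k=i$ versus $k\neq i$) that the paper compresses into one line, so there is nothing further to reconcile.
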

\begin{proof}  Suppose $W$ is a superpotential with variable $X_1$ that is neither a root or a pointer.  Every monomial in $dW$ contains at least one $X_j$ for $i>1$.  Hence, $dW|_{X_{i>1} = 0} = 0$ for all $X_1$, and the potential is not compact.
\end{proof}
\begin{defn} A superpotential $W$ is IR--equivalent to $\Wt$ if and only if the two LG theories flow to the same fixed point.
\end{defn}
There are two simple examples of IR equivalence:  $W$ and $\Wt$ may be related by   a change of variables compatible with quasi--homogeneity, or the two may be equivalent up to massive fields, in which case we can drop the massive fields without affecting the IR fixed point.

\begin{lem}
Up to IR equivalence we can assume $0 < q_i \le 1/2$.
\end{lem}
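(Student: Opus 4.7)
My plan is to dispose of the boundary cases $q_i = 0$ and $q_i \ge 1$ first, and then handle the range $1/2 < q_i < 1$ by an integrating-out argument that reduces the number of fields by two while preserving the IR fixed point.

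For the boundary cases, the previous lemma (every variable is a root or pointer of a compact $W$) is the key input. If $q_i \ge 1$, then quasi-homogeneity forces every monomial to contain $X_i$ at most to the first power: when $q_i > 1$, $X_i$ cannot appear at all; when $q_i = 1$, the only allowed monomial is a bare linear $X_i$, and then $\p_i W$ is a nonzero constant so the critical-point equations are inconsistent (or, after absorbing the constant, $X_i$ becomes a free field with a continuum of critical points). Either possibility violates compactness. A similar argument rules out $q_i = 0$: $X_i^{m_i}$ has charge $0 \ne 1$ so $X_i$ cannot be a root, and a pointer monomial $X_i^{m_i} X_j$ would require $q_j = 1$, which we have just excluded. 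So $0 < q_i < 1$ for every $i$.

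The main step is to suppose that some $q_i > 1/2$ and argue that the theory is IR-equivalent to one with one fewer pair of fields. Since $q_i > 1/2$, the exponent $1/q_i$ lies in $(1,2)$ and is not an integer, so $X_i$ cannot be a root; by the lemma it is a pointer with monomial $X_i^{m_i} X_j$ satisfying $m_i q_i + q_j = 1$, and $m_i q_i < 1$ forces $m_i = 1$ and $q_j = 1 - q_i < 1/2$. Because no monomial in $W$ can contain $X_i^2$ (it would have charge $2 q_i > 1$), $X_i$ appears only linearly:
\begin{align*}
W = X_i \bigl( X_j + F_1(X_{\ne i, j}) \bigr) + G(X_{\ne i}),
\end{align*}
after normalizing the $X_i X_j$ coefficient to $1$; here $F_1$ is quasi-homogeneous of charge $q_j$. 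The substitution $Y_j = X_j + F_1(X_{\ne i, j})$ is a legal quasi-homogeneous change of variable, and in the new coordinates
\begin{align*}
W = X_i Y_j + \widetilde{G}(X_{\ne i, j}, Y_j),
\end{align*}
with $X_i$ appearing only in the mass term. The equations $\p_{X_i} W = Y_j = 0$ and $\p_{Y_j} W = X_i + \p_{Y_j} \widetilde{G} = 0$ integrate out the pair $(X_i, Y_j)$, giving an IR-equivalent effective superpotential $W_{\text{eff}} = \widetilde{G}(X_{\ne i, j}, 0)$ in the remaining $n - 2$ variables, whose charges are unchanged and whose central charge still equals $c$ because the removed pair contributes $3[(1 - 2q_i) + (1 - 2q_j)] = 0$.

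The conclusion then follows by iteration: each reduction decreases by one the number of variables with charge $> 1/2$, so after finitely many steps we reach a theory with $0 < q_k \le 1/2$ for all remaining $k$. The main obstacle I anticipate is verifying that $W_{\text{eff}}$ is again compact so that the root/pointer lemma applies at each subsequent iteration. This should follow from the fourth equivalence listed after the definition of compactness: in the new coordinates $\det \Hess W$ factors as the $2 \times 2$ block of the mass term $X_i Y_j$ times $\det \Hess W_{\text{eff}}$, so nonvanishing of the former in $R_W$ forces nonvanishing of the latter in $R_{W_{\text{eff}}}$.
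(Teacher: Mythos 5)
Your proof is correct and follows essentially the same route as the paper: both observe that a field with $q > 1/2$ can enter $W$ only linearly, that compactness forces its coefficient to contain a partner field of charge $1-q$ linearly (you extract this from the root-or-pointer lemma with the forced exponent $m_i = 1$, while the paper gets it from the direct computation $dW|_{Z=0} = Y\, dF|_{Z=0}$), and both then change variables to turn that coefficient into a mass term and integrate out the massive pair, iterating until every charge is at most $1/2$. Your additional checks --- explicitly excluding $q_i = 0$ and $q_i \ge 1$, and verifying compactness of $W_{\text{eff}}$ via the Hessian criterion --- are details the paper leaves implicit rather than a genuinely different method.
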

\begin{proof}
Let $Y$ be any variable with $q_Y >1/2$ and denote the remaining fields by $Z$. With our assumptions on the charges
\begin{align}
W &= Y F(Z) + G(Z) &  \implies&&
dW & = dY F(Z) + (Y dF + dG)~.
\end{align}
$F(Z)$ has degree $1-q_Y$.  Any quasi--homogeneous polynomial $F(Z)$ with positive charge vanishes at $Z=0$, so there are two possibilities:
\begin{enumerate}
\item $q_Y = 1$.  In this case $F(Z) = F_0$, a constant, and $W$ has no critical point.
\item $q_Y < 1$.  In this case $dW|_{Z=0} = Y dF|_{Z=0}$ for all $Y$.  Hence $W$ is singular unless among the $Z$ there is a field $\Zh$ with $q_{\Zh} = 1-q_Y$ and $F(Z) =  \Zh + \ldots$, where ``$\ldots$'' denote non-linear terms in the $Z$; these must be independent of $\Zh$.  In other words,
\begin{align}
W = Y (\Zh - F_{\text{nlin}} (Z) ) + G(Z,\Zh)~.
\end{align}
Hence, $Y$ is a Lagrange mutliplier enforcing $\Zh = F_{\text{nlin}}$.  Since the right-hand-side is $\Zh$ independent, we can solve the condition and obtain the equivalent superpotential
\begin{align}
W = Y\Zh + G(Z,F)~.
\end{align}
This is IR--equivalent to $W = G(Z,F)$, a potential without either the $Y$ or $\Zh$ field. 
\end{enumerate}
Repeating this for every field with $q_Y > 1/2$, we obtain an IR-equivalent potential where every field has $q\le 1/2$.
\end{proof}

In fact, we have a stronger result.
\begin{propo}
Up to IR equivalence we can assume $0<q < 1/2$ for every compact potential $W$; in particular, every monomial in $W$ is at least cubic in the fields.
\end{propo}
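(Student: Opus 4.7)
The plan is to push the bound of the previous lemma from $q_i \le 1/2$ to $q_i < 1/2$ by showing that any field of charge exactly $1/2$ can be eliminated by an IR equivalence. Suppose some $Y$ has $q_Y = 1/2$. Quasi-homogeneity forces
\begin{align}
W = a Y^2 + Y F(Z) + G(Z),
\end{align}
with $Z$ the remaining fields, $F(Z)$ quasi-homogeneous of charge $1/2$, and $G(Z)$ of charge $1$. I would split on the coefficient $a$.

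For $a \neq 0$, the quasi-homogeneous shift $Y \to Y - F(Z)/(2a)$ (valid because $F$ has the same charge as $Y$) brings $W$ to $aY^2 + G'(Z)$ with $G'(Z) = G(Z) - F(Z)^2/(4a)$. The shifted $Y$ is now decoupled and quadratic, so integrating it out yields the IR-equivalent potential $G'(Z)$, with one fewer field and one fewer charge-$1/2$ variable. For $a = 0$, I reuse the argument from the previous lemma essentially verbatim: because no $q_j$ equals $1$, $G$ has no linear term, so the identity $dW|_{Z=0} = Y\,dF|_{Z=0}$ together with compactness of $W$ forces the linear part of $F$ to be nonzero. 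A linear rotation among the charge-$1/2$ fields of $Z$ puts $F$ in the form $\Zh + N$, with $\Zh$ of charge $1/2$ and $N$ nonlinear in the remaining fields $Z'$; the Lagrange-multiplier step from the previous lemma then eliminates both $Y$ and $\Zh$, leaving an IR-equivalent compact potential in the $Z'$.

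Either case strictly decreases the number of charge-$1/2$ variables, so iterating produces an IR-equivalent compact representative with every $q_i < 1/2$. The cubic bound is then automatic: any monomial $\prod_i X_i^{p_i}$ in such a $W$ satisfies $1 = \sum_i p_i q_i < \half \sum_i p_i$, forcing $\sum_i p_i \ge 3$.

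The one point that needs care is verifying that compactness is preserved through each reduction. For $a\neq 0$ this is immediate, since $aY^2 + G'(Z)$ is a manifest direct sum of a Morse quadratic and $G'(Z)$, so compactness of $W$ descends to $G'$. For $a = 0$ the preservation of compactness is exactly the content of the Lagrange-multiplier calculation already established in the previous lemma, so no new work is needed. The rest is a clean induction on the count of charge-$1/2$ fields.
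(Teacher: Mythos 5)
Your proof is correct, and it reaches the same reduction as the paper but by a genuinely different organization. The paper handles all $k$ charge-$\half$ fields $Y_a$ simultaneously: an $\SO(k)$ rotation diagonalizes their quadratic form first, bringing $W$ to $\sum_a \left[\ff{m_a}{2}Y_a^2 - Y_a F_a(Z)\right] + G(Z)$ with each $F_a$ depending only on the charge-$<\half$ fields $Z$; then $dW|_{Z=0} = \sum_a m_a Y_a\, dY_a$ shows compactness forces every $m_a \neq 0$, and all the $Y_a$ are integrated out in one step via $Y_a = m_a^{-1}F_a$, leaving $G(Z) - \sum_a F_a(Z)^2/(2m_a)$. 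Because the rotation has already absorbed all off-diagonal mass terms, the paper never encounters your $a=0$ branch for a compact $W$ --- in its normal form, $a=0$ (i.e.\ $m_a=0$) simply implies non-compactness. Your induction eliminates one $Y$ at a time without rotating first, so a compact $W$ can legitimately have $a=0$ with the mass hidden in a cross term $Y\Zh$; your Lagrange-multiplier branch, inherited from the preceding lemma, is precisely the elimination of that hyperbolic pair, and since it removes two charge-$\half$ fields at once while the $a\neq 0$ branch removes one, your induction terminates. The two routes trade off as follows: the paper's one-shot diagonalization is shorter and makes the compactness criterion transparent (any vanishing diagonal mass kills compactness), while your version avoids the $\SO(k)$ rotation, reuses the previous lemma's machinery essentially verbatim, and makes explicit the compactness bookkeeping at each reduction step, which the paper leaves implicit (your observation that $aY^2 + G'(Z)$ is a decomposable sum, so compactness descends to $G'$, is the right justification). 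Your derivation of the cubic bound from $1 = \sum_i p_i q_i < \half\sum_i p_i$ is the same trivial final step the paper intends.
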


\begin{proof}
By the previous lemma, we already have $q_i \le 1/2$.  Now we decompose the fields into $Y_a$, $a=1,\ldots,k$, with $q_a = 1/2$ and $Z_s$ with $q_s < 1/2$.   After an $\SO(k)$ linear transformation on the $Y_a$, we find that the general form of the potential is
\begin{align}
W = \sum_{a=1}^k \left[ \ff{m_a}{2} Y_a^2 - Y_a F_a(Z) \right] + G(Z)~.
\end{align}
Hence,
\begin{align}
dW = \sum_{a=1}^k dY_a \left[ m_a Y_a - F_a(Z) \right] + dG -\sum_{a=1}^k Y_a dF_a~.
\end{align}
Since $\deg F_a=1/2$ and $q_Z < 1/2$, it follows that  $dW|_{Z=0} = \sum_a Y_a m_a dY_a$.  Hence, $W$ is non-compact if any $m_a = 0$.  On the other hand, if $m_a >0$ for all $a$, then we can use the equation of motion $Y_a = m_a^{-1} F_a$ to obtain the IR-equivalent potential
\begin{align}
W = G(Z) - \sum_{a=1}^k \frac{F_a(Z)^2}{2 m_a}~.
\end{align}
\end{proof}

\subsection{Constraints on $n$ from $c$}
A gc family leads to a family of IR fixed points with
\begin{align}
c = \cb = 3 \sum_{i=1}^n  \left( 1- 2q_i\right)~.
\end{align}
Since every $q_i >0$ it follows that $n > c/3$. We also see that as $q_i \to 0$, we recover the free field central charge values~\footnote{A class of intriguing theories have been obtained by taking large $q_i$ limits of minimal models in~\cite{Fredenhagen:2012rb,Fredenhagen:2012bw,Gaberdiel:2016xwo}; it may be interesting to extend that construction to more general LG theories, such as the infinite families constructed in this work.}; of course in the strict limit we would get a non-compact theory.  Here is a less trivial consequence, also proven in~\cite{Kreuzer:1992np}.
\begin{propo}  \label{propo:cbound} A gc family has $c/3<n \le c$.
\end{propo}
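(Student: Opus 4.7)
The lower bound $c/3 < n$ is immediate: since every $q_i > 0$, we have $c = 3\sum_{i=1}^n(1-2q_i) < 3n$. All the work is in the upper bound $n \le c$, which is equivalent to $\sum_i q_i \le n/3$.

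My plan for the upper bound is to exploit the previous proposition, which forces every monomial of $W$ to be at least cubic and every variable to be a root or a pointer. For each $i$, select a canonical monomial $M_i \in W$: $M_i = X_i^{m_i}$ with $m_i \ge 3$ if $X_i$ is a root, or $M_i = X_i^{m_i} X_{\pi(i)}$ with $m_i \ge 2$ if $X_i$ is a pointer (choosing $\pi(i)$ arbitrarily if there is a choice). Since each $M_i$ is quasi--homogeneous of degree $1$, summing these relations over $i$ and regrouping yields
\begin{align*}
n \;=\; \sum_i \bigl(m_i q_i + [X_i\text{ is a pointer}]\,q_{\pi(i)}\bigr) \;=\; \sum_k w_k\, q_k, \qquad w_k := m_k + d_k,
\end{align*}
where $d_k := \#\{i : \pi(i)=k\}$ counts the pointers targeting $X_k$. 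The desired bound $3\sum_k q_k \le n$ is then equivalent to $\sum_k (w_k-3)q_k \ge 0$.

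Checking term by term, $w_k-3 \ge 0$ holds whenever $X_k$ is a root (then $m_k \ge 3$), or a pointer with $m_k \ge 3$, or a pointer with $m_k=2$ that has $d_k \ge 1$. The only possible deficits come from ``bad'' pointers with $m_k=2$ and $d_k=0$, each contributing $-q_k$. For such a bad pointer $i$ with target $j = \pi(i)$, I would use the local relation $2q_i + q_j = 1$, so that $q_i = (1-q_j)/2$, while $j$ simultaneously accrues $+1$ in $d_j$. A direct check shows the surplus $(w_j-3)q_j$ absorbs the deficit $-q_i$ whenever $j$ itself is a root or a pointer with $m_j \ge 3$, since the condition $m_j + 2d_j \ge 5$ is secured by $m_j \ge 3$ and $d_j \ge 1$. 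When $j$ is instead a pointer with $m_j = 2$, I would iterate the discharging along the resulting chain of $m=2$ pointers, each link transferring the residual deficit one step further toward the chain's terminus.

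The main obstacle is establishing that the iterative discharge always terminates with net nonnegative balance: in principle the deficit could propagate down an arbitrarily long chain of $m=2$ pointers into a vertex with no slack to absorb it. This is where compactness of $W$ must enter decisively. In small cases one checks directly (via the Hessian) that the worst configurations---for instance, three or more bad pointers targeting the same low-exponent vertex---fail to be compact, so they are excluded from gc families. To finish the argument I would induct on $n$: given a gc family of $n$ variables, either the canonical graph $i \mapsto \pi(i)$ has no bad pointers (and we are done by the term-by-term bound), or one can strip off a bad pointer $i$ paired with its terminating surplus, producing a smaller compact subsystem (after possibly integrating out a massive pair as in the previous proposition) to which the induction hypothesis applies. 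Combining the local discharging with this inductive step yields $\sum(w_k-3)q_k \ge 0$ and hence the claim $n \le c$.
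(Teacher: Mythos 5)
Your lower bound and the identity $n=\sum_k (m_k+d_k)\,q_k$ are fine, but the discharging argument has a concrete error and then an admitted hole exactly where the real content lies. The error: your ``direct check'' that a target $j$ with $m_j\ge 3$ and $d_j\ge 1$ absorbs the deficit via $m_j+2d_j\ge 5$ is only valid when $j$ is a \emph{root}, since it uses $q_j=1/m_j$. If $j$ is itself a pointer with $m_j=3$ and $d_j=1$, then $q_j=(1-q_{\pi(j)})/3<1/3$ while the deficit is $q_i=(1-q_j)/2>1/3$, so the local surplus $(w_j-3)q_j=q_j$ can \emph{never} absorb it; e.g.\ for $W=Z^2Y+Y^3T+T^m$ the balance at $Y$ is negative and is only rescued by the surplus at $T$. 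So propagation is needed past $m_j\ge 3$ pointers too, not just along $m=2$ chains, and your bookkeeping as stated is wrong. The hole: you yourself flag that termination of the discharge with nonnegative balance, and the exclusion of several bad pointers sharing a target, require compactness --- but you never supply that input, deferring to ``one checks directly'' and an induction whose key step (stripping off a bad pointer to get a ``smaller compact subsystem'') is unjustified: deleting variables or monomials from a compact quasi-homogeneous $W$ does not in general preserve compactness, and there are no massive pairs left to integrate out once all $q_i<1/2$.

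For comparison, the paper's proof avoids chains and induction entirely with one sharp use of compactness. Any field $Z$ with $q_Z\in(1/3,1/2)$ is charge-forced to be an exponent-$2$ pointer, $W\supset Z^2Y$ with $q_Y=1-2q_Z$, and two such fields $Z_1,Z_2$ cannot point at the same $Y$: writing $W=Q_1(Z_1,Z_2)Y+Q_2(Z_1,Z_2)F(U)+G(Y,U)$ with $Q_{1,2}$ quadratic, $dW$ vanishes on the locus $Y=U=0$, $Q_1(Z_1,Z_2)=0$, contradicting compactness. This makes $Z\mapsto Y$ an injective matching with $q_Y<1/3$, so the fields partition into pairs $(Z,Y)$ contributing $3(1-2q_Z)+3(1-2q_Y)=6q_Z>2$ to $c$, plus leftover fields with $q\le 1/3$ each contributing at least $1$; summing gives $c\ge n$ directly. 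If you want to salvage your scheme, the missing compactness lemma you need is precisely this ``no shared target for exponent-$2$ pointers with $q>1/3$'' statement --- but once you have it, the paper's pairing count supersedes the discharging machinery.
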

Hence, to describe theories with $c<6$ we consider $1\le n \le 5$ and
\begin{align}
\qtot = \sum_{i=1}^n q_i  > \frac{3n-6}{2}~.
\end{align}
In other words, we have
\begin{align}
\label{eq:qtotbounds}
n &< 3 & n & = 3 & n&=4 & n&=5  \nonumber \\
\qtot &>0 &\qtot &>\ff{1}{2}  &	\qtot&>1	& \qtot &>\ff{3}{2}
\end{align}

To prove the non-trivial part of the proposition, observe that every field with $q_i \le 1/3$ contributes at least $1$ to the central charge, and a pair of fields $X,Y$ with $1/3<q_X < 1/2$ and $q_Y = 1-2q_X$ contributes at least $2$ to the central charge.  The claim then follows from
\begin{lem}
Fix a gc family.  Suppose there are $k$ fields $Z_a$ with $q(Z_a) \in (1/3,1/2)$.  Then for every $Z_a$ there is a distinct field $Y_a$ with $q(Y_a) = 1-2q(Z_a)$. 
\end{lem}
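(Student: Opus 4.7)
The plan is to prove this in two stages: first establish that each $Z_a$ must point at a field of the prescribed charge, then argue that these target fields can be chosen distinct across different $Z_a$'s.

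For the first stage, since $q(Z_a) \in (1/3,1/2)$, the integer $1/q(Z_a)$ would lie strictly between $2$ and $3$, so $Z_a$ cannot be a root; by the preceding lemma it must be a pointer, i.e.\ $W \supset Z_a^{m_a} Y_a$ for some $Y_a \neq Z_a$ with $m_a\, q(Z_a) + q(Y_a) = 1$.  The bounds $q(Y_a) < 1/2$ (from the proposition above) and $q(Z_a) > 1/3$, combined with $m_a \in \Z_{>0}$, leave only $m_a = 2$ and $q(Y_a) = 1 - 2 q(Z_a) \in (0, 1/3)$.

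For distinctness, I would group the $Z_a$'s by charge; $Z$'s of different charges automatically have targets of different charges, so it suffices to fix $q \in (1/3,1/2)$ and show $l \le m$, where $Z_1, \ldots, Z_l$ are all fields of charge $q$ and $Y_1, \ldots, Y_m$ are all fields of charge $1 - 2q$.  The core step is to consider the coordinate subspace $\Sigma \subset \C^n$ on which every field other than $Z_1, \ldots, Z_l$ vanishes and to compute $dW|_\Sigma$.  Quasi-homogeneity forbids any monomial of $W$ supported only on the $Z_i$'s (that would require $1/q \in \Z$, impossible on $(1/3,1/2)$), so $\p W/\p Z_i|_\Sigma \equiv 0$.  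For any other field $X_k$, a monomial of $W$ of the form $X_k \cdot (Z)^d$ requires $q(X_k) = 1 - dq$; scanning $d = 0,1,2,3,\ldots$ against $0 < q(X_k) < 1/2$ leaves only $d = 2$, forcing $q(X_k) = 1 - 2q$ and thus $X_k = Y_\alpha$ for some $\alpha$.  The only nontrivial restricted gradient components are then the quadratic forms $Q_\alpha(Z) := \p W/\p Y_\alpha|_\Sigma$ in $Z_1, \ldots, Z_l$.

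Thus the critical locus of $W$ inside $\Sigma$ is the affine cone $\{Q_1 = \cdots = Q_m = 0\} \subset \C^l$.  If $l > m$, the projective dimension theorem applied to these $m$ quadric hypersurfaces in $\P^{l-1}$ yields a non-empty intersection of projective dimension $\ge l - 1 - m \ge 0$; lifting any nonzero projective point produces $Z_0 \in \C^l \setminus \{0\}$ annihilating every $Q_\alpha$, and $(Z_0, 0) \in \C^n$ is a critical point of $W$ distinct from the origin, contradicting compactness.  Hence $l \le m$, and within each charge class one can pick the $Y_a$'s to be mutually distinct.

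The main obstacle is the quasi-homogeneous bookkeeping in the third paragraph: one must be sure that no monomial type beyond the cubics $Y_\alpha Z_i Z_j$ produces an additional restriction on $\Sigma$, since any missed constraint on the $Z_i$'s could shrink the common zero locus of the $Q_\alpha$'s.  Fortunately this reduces to the short enumeration of integer solutions $d$ to $q(X_k) = 1 - dq$ under the bounds $0 < q(X_k) < 1/2$ and $q \in (1/3, 1/2)$, which leaves only $d = 2$.
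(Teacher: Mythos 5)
Your proof is correct, and its first stage (the charge bookkeeping forcing $W \supset Z_a^2 Y_a$ with $q(Y_a) = 1-2q(Z_a)$) is exactly the paper's. For distinctness, however, you take a genuinely stronger route. The paper argues by contradiction in the minimal configuration: if $Z_1$ and $Z_2$ point at the same $Y$, it writes $W = Q_1(Z_1,Z_2)\,Y + Q_2(Z_1,Z_2)F(U) + G(Y,U)$ and observes that $dW$ vanishes on the locus $Y = U = 0$, $Q_1(Z_1,Z_2)=0$, which is nontrivial because a single quadric in two complex variables always has a nonzero zero. You instead prove the sharper per-charge-class counting statement $l \le m$ (all $l$ fields of charge $q$ versus all $m$ fields of charge $1-2q$) by restricting to the $Z$-coordinate subspace, showing via the same enumeration of allowed monomial types that the restricted gradient consists of exactly $m$ quadrics $Q_\alpha(Z_1,\ldots,Z_l)$, and invoking the projective dimension theorem for $m < l$ hypersurfaces in $\mathbb{P}^{l-1}$. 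The paper's argument is shorter, but yours buys robustness: when there are several fields of charge $1-2q$, the paper's displayed $F(U)$ acquires linear terms, so its claim that $dW=0$ on $\{Y=U=0,\,Q_1=0\}$ needs $Q_2$ to vanish as well (the honest resolution being that the pointer assignment can be rearranged); your formulation sidesteps this entirely by treating all charge-$(1-2q)$ fields symmetrically, and it delivers the injective assignment $Z_a \mapsto Y_a$ directly rather than only excluding the two-pointers-one-target case.
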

\begin{proof}  $Z_a$ must be a root or a pointer, but its charges only allow it to be a pointer of the form $W \supset Z_a^2 Y_a$, with $q(Y_a) = 1-2 q(Z_a)$.  Suppose $Z_1$ and $Z_2$ point to the same field $Y$.  Then $Z_{1,2}$ have the same charge, and the potential is of the form 
\begin{align*}
W = Q_1(Z_1,Z_2) Y + Q_2(Z_1,Z_2) F(U) + G(Y,U)~,
\end{align*}
where $U$ denote all the other fields, and $Q_{1,2}$ are quadratic polynomials.  It is then easy to see that $dW = 0$ whenever $Y=U=0$ and $Q_1(Z_1,Z_2) = 0$, and hence $W$ is not compact.
\end{proof}
To finish the proof of the proposition, we group our fields by the pairs $(Z_a,Y_a)$, as well as any remaining fields $U$ with $q(U) \le 1/3$.

\subsection{A graphical notation and the Kreuzer-Skarke theorem}
Every compact potential $W$ with $n$ variables yields a directed graph, known as the \textit{skeleton graph} of $W$, obtained as follows.   Every variable corresponds to a node indicated by an open circle $\circ$;  there is an arrow from node $i$ to node $j$ if and only if $X_i$ points to $X_j$.  In general a $W$ may yield several distinct graphs.  For instance, we have the following potentials and corresponding graphs with $n\le 2$:
\begin{align}
W & = X_1^{m_1}~~&
\begin{tikzpicture}[baseline=-3mm]
\node [] (1)   [label=below:$m_1$] {$\circ$};\end{tikzpicture}~ 
\nonumber\\[2mm]
W  &\supset X_1^{m_1} + X_2^{m_2}~~&
\begin{tikzpicture}[baseline=-3mm]
\node [] (1)   [label=below:$m_1$] {$\circ$};
\node [] (2)  [right=of  1, label=below:$m_2$] {$\circ$};
\end{tikzpicture}~ \nonumber\\[2mm]
W &\supset  X_1^{m_1} + X_1 X_2^{m_2}~~&
\begin{tikzpicture}[baseline=-3mm]
\node [] (1)   [label=below:$m_1$] {$\circ$};
\node [] (2)  [right=of  1, label=below:$m_2$] {$\circ$};
\draw [->]  (2) -- (1);
\end{tikzpicture}~ \nonumber\\[2mm]
 W &\supset X_1^{m_1}X_2 + X_1 X_2^{m_2}~~&
 \begin{tikzpicture}[baseline=-3mm]
\node [] (1)   [label=below:$m_1$] {$\circ$};
\node [] (2)  [right=of  1, label=below:$m_2$] {$\circ$};
\draw [->]  (2) to [out = 160, in = 20] (1);
\draw[->] (1)  to [out = -20, in = -160] (2);
\end{tikzpicture}~
\end{align}
Note that we decorate each node with an exponent label; this is a convenient way of keeping track of the combinatoric and integer data in the same graphical notation.


Every gc family with $n=2$ has a potential of one of these forms.  Note that a skeleton graph uniquely determines the charges in terms of the exponents.

A skeleton potential, even if taken with generic coefficients, need not be smooth.  To see this, consider the $n=3$ skeletons:
\begin{align}
\ge 2~\text{roots} &&
\begin{tikzpicture}[baseline=-2mm]
\node [] (1)   [label=below:$m_1$] {$\circ$};
\node [] (2)  [right=of  1, label=below:$m_2$] {$\circ$};
\node [] (3) [right=of 2, label = below: $m_3$] {$\circ$};
\end{tikzpicture}~ &&
\begin{tikzpicture}[baseline=-2mm]
\node [] (1)   [label=below:$m_1$] {$\circ$};
\node [] (2)  [right=of  1, label=below:$m_2$] {$\circ$};
\node [] (3) [right=of 2, label = below: $m_3$] {$\circ$};
\draw [->]  (3) -- (2);
\end{tikzpicture}~ \nonumber\\[2mm]
1~\text{root} &&
\begin{tikzpicture}[baseline=-2mm]
\node [] (1)   [label=below:$m_1$] {$\circ$};
\node [] (2)  [right=of  1, label=below:$m_2$] {$\circ$};
\node [] (3) [right=of 2, label = below: $m_3$] {$\circ$};
\draw [->]  (2) -- (1);
\draw [->] (3) -- (2);
\end{tikzpicture}~ &&
\begin{tikzpicture}[baseline=-2mm]
\node [] (1)   [label=below:$m_1$] {$\circ$};
\node [] (2)  [right=of  1, label=below:$m_2$] {$\circ$};
\node [] (3) [right=of 2, label = below: $m_3$] {$\circ$};
\draw [->]  (3) -- (2);
\draw [->] (1) -- (2);
\end{tikzpicture}~ \nonumber\\[2mm]
0~\text{roots} &&
\begin{tikzpicture}[baseline=-2mm]
\node [] (1)   [label=below:$m_1$] {$\circ$};
\node [] (2)  [right=of  1, label=below:$m_2$] {$\circ$};
\node [] (3) [right=of 2, label = below: $m_3$] {$\circ$};
\draw [->]  (2) -- (1);
\draw [->] (3) -- (2);
\draw [->] (1) to [out = 20, in = 160] (3);
\end{tikzpicture}~ &&
\begin{tikzpicture}[baseline=-2mm]
\node [] (1)   [label=below:$m_1$] {$\circ$};
\node [] (2)  [right=of  1, label=below:$m_2$] {$\circ$};
\node [] (3) [right=of 2, label = below: $m_3$] {$\circ$};
\draw [->]  (3) -- (2);
\draw [->] (2) to [out = -160, in = -20] (1);
\draw [->] (1) to [out = 20, in = 160] (2);
\end{tikzpicture}~ \nonumber\\
\end{align}
Most of these are smooth and hence define gc families:  just by including the indicated monomials with generic coefficients, we are guaranteed compactness.  This is not the case for the second column with $1$ or $0$ roots: we need additional monomials.  For instance, consider the example
\begin{align}
\begin{tikzpicture}[baseline=-2mm]
\node [] (1)   [label=below:$m_1$] {$\circ$};
\node [] (2)  [right=of  1, label=below:$m_2$] {$\circ$};
\node [] (3) [right=of 2, label = below: $m_3$] {$\circ$};
\draw [->]  (3) -- (2);
\draw [->] (1) -- (2);
\end{tikzpicture}~
&&
W \supset (X_1^{m_1} + X_3^{m_3}) X_2 + X_2^{m_2}~.
\end{align}
Without including another monomial, this will be non-compact.   To see this, we compute
\begin{align}
\left.dW\right|_{X_2 = 0} = (X_1^{m_1} + X_3^{m_3}) dX_2~,
\end{align}
and this clearly has a zero with $X_{1,3} \neq 0$. However, if we can add  $X_1^{p_1} X_3^{p_3}$, then we find a gc family.  This addition is indicated by drawing a ``link'' between the nodes.  
\begin{defn}
Given a potential and its skeleton, a link between two nodes, indicated by a dashed line, denotes a monomial that only depends on the connected fields.
\end{defn}
Thus, in our refined example, we have
\begin{align}
\begin{tikzpicture}[baseline=-2mm]
\node [] (1)   [label=below:$m_1$] {$\circ$};
\node [] (2)  [right=of  1, label=below:$m_2$] {$\circ$};
\node [] (3) [right=of 2, label = below: $m_3$] {$\circ$};
\draw [->]  (3) -- (2);
\draw [->] (1) -- (2);
\draw[dashed] (3) to [out = 120, in =60] (1);
\end{tikzpicture}~
&&
W \supset (X_1^{m_1} + X_3^{m_3}) X_2 + X_2^{m_2} + X_1^{p_1} X_3^{p_3}~.
\end{align}

We need two more graphical concepts before we present the theorem of~\cite{Kreuzer:1992np} on compact gc families.  We already defined a link between two nodes.  Similarly, we can recursively define links between links.
\begin{defn}  A link $L$ between two links $L_1$ and $L_2$ corresponds to a monomial constructed from fields participating in links $L_1$ and $L_2$.  In this case we join $L_1$ and $L_2$ by another dashed line.  In our graphical notation we find it convenient to add a connecting node (indicated by $\bullet$) for every link involved in links between links.
\end{defn}
We also generalize the concept of links and pointers.
\begin{defn} A \textit{pointing link} $L$ is a link between two nodes or two links that points to a node $j$ corresponds to a monomial that is linear in the variable $X_j$ and otherwise depends only on the fields that are joined by the link $L$.    Note that the variable $X_j$ is not counted as one of the variables of the pointing link.  We will also indicate the pointing links with a $\bullet$.
\end{defn}
To illustrate these definitions consider two ways to complete the skeleton
\begin{align}
\begin{tikzpicture}[baseline=-2mm]
\node [] (1)   [label=below:$m_1$] {$\circ$};
\node [] (2)  [right=of  1, label=below:$m_2$] {$\circ$};
\node [] (3) [right=of 2, label = below: $m_3$] {$\circ$};
\node [] (4) [right= of 3, label = below :$m_4$] {$\circ$};
\draw [->]  (3) -- (2);
\draw [->] (1) -- (2);
\end{tikzpicture}
\end{align}
to a gc family:
\begin{align}
\begin{tikzpicture}[baseline=-2mm]
\node [] (1)   [label=below:$m_1$] {$\circ$};
\node [] (2)  [right=of  1, label=below:$m_2$] {$\circ$};
\node [] (3) [right=of 2, label = below: $m_3$] {$\circ$};
\node [] (4) [right= of 3, label = below :$m_4$] {$\circ$};
\draw [->]  (3) -- (2);
\draw [->] (1) -- (2);
\draw [dashed] (3) to [out = 120, in = 60] (1);
\end{tikzpicture}
& &\text{or}  & &
\begin{tikzpicture}[baseline=-2mm]
\node [] (1)   [label=below:$m_1$] {$\circ$};
\node [] (2)  [right=of  1, label=below:$m_2$] {$\circ$};
\node [] (3) [right=of 2, label = below: $m_3$] {$\circ$};
\node [] (4) [right= of 3, label = below :$m_4$] {$\circ$};
\node [] (L1) [above=2mm of 2] {$\bullet$};
\draw [->]  (3) -- (2);
\draw [->] (1) -- (2);
\draw[dashed] (L1) to [out = 0, in =90] (3);
\draw[dashed] (L1) to [out = 180, in = 90] (1);
\draw[dashed] (L1) to [out = 30, in = 150] (4);
\end{tikzpicture}
\end{align}
On the left we have a link between $3$ and $4$, which corresponds to the monomial $X_1^{p_1} X_3^{p_3}$; on the right we have a pointing link, which corresponds to the monomial $X_1^{p_1} X_3^{p_3} X_4$.

We now state the main theorem of~\cite{Kreuzer:1992np}.
\begin{thm} \label{thm:KS} Fix a family of potentials.  A necessary and sufficient condition for this to be gc is that it contains a member $W$ that can be associated a graph with the following properties:
\begin{enumerate}
\item each variable is a root or a pointer;
\item for any pair of variables or for a variable and link that point at some node $v$, there is a link $L$ joining the two pointers, and $L$ does not point at $v$ or any of the nodes that are targets of the objects that are joined by $L$.
\end{enumerate}
\end{thm}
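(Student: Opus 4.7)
The plan is to prove both implications separately. For necessity, I start with a gc representative $W$ and build its skeleton graph: condition (1) is immediate from the previous lemma, since in a compact potential every variable is a root or a pointer. For condition (2), I argue by contrapositive. Suppose at some node $v$ there are two distinct pointing objects $P_1,P_2$ (each either a variable or a link) contributing monomials $M_{P_1}X_v$ and $M_{P_2}X_v$ to $W$, and suppose no joining link $L$ satisfies the stipulated conditions. Consider the subvariety $\mathcal{V}\subset \C^n$ where all variables not participating in $P_1\cup P_2$ (and the subordinate structure of any links) are set to zero. Then $\partial W/\partial X_v$ restricted to $\mathcal{V}$ reduces, up to coefficients, to $c_1 M_{P_1}+c_2 M_{P_2}$, since no other monomial in $W$ is linear in $X_v$ while depending only on $P_1\cup P_2$ variables. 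The equation $c_1 M_{P_1}+c_2 M_{P_2}=0$ has nontrivial solutions, and the remaining $\partial W/\partial X_j$ for $X_j$ inside $P_1$ or $P_2$ are simultaneously solvable in a nontrivial way precisely because no joining link exists to pin these variables. This produces a critical point of $W$ other than the origin, contradicting compactness and hence the gc assumption for the family.

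For sufficiency, I would build $W$ from the indicated monomials with generic coefficients and show that $\bW=\langle W_1,\dots,W_n\rangle$ has only the origin as its vanishing locus, using the fourth equivalent characterization of compactness (the Hessian determinant being nonzero in $R_W$). The argument is inductive on the skeleton: process variables using the R-charge filtration, with the highest-charge variables being either roots (giving $W_i\supset m_i X_i^{m_i-1}$ and hence forcing $X_i=0$ at any critical point) or pointers to some target $X_j$ (giving $W_j\supset X_i^{m_i}$, which, combined with the joining-link contributions guaranteed by condition (2), forces the pointer variables to vanish as well). The condition that the joining link $L$ does not itself point at $v$ or any target of the joined objects is exactly what is needed to prevent $L$ from reintroducing the same degeneracy at a lower level in the induction. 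Iterating this downward through the charge filtration exhausts all variables, yielding compactness.

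The hard part will be the sufficiency direction, specifically the combinatorial bookkeeping at arbitrary depth of nested links—links between links, pointing links, and their interactions with the root/pointer structure of ordinary variables. One must show that conditions (1) and (2) capture \emph{every} possible failure mode of compactness without missing some higher-order degeneracy hidden inside chains of $\bullet$ connecting nodes. The natural tool is a genericity argument on the space of coefficients $a_p$, showing that the Jacobian determinant (a quasi-homogeneous polynomial of total R-charge $\sum_i(1-2q_i)$) is not identically zero in $R_W$ whenever the graph conditions are met. This will likely require an inductive decomposition of the skeleton into subgraphs associated with each pointing object together with its joining link, and a careful verification that the charge-preserving monomials allowed by quasi-homogeneity suffice to break all potential non-isolated critical loci.
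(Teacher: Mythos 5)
First, a point of comparison: the paper does not prove this statement at all. Theorem~\ref{thm:KS} is quoted from Kreuzer and Skarke~\cite{Kreuzer:1992np}, so there is no internal proof to measure your attempt against; it must be judged on its own terms. On those terms, what you have written is an outline of the standard strategy (necessity by exhibiting a non-isolated critical locus, sufficiency by an induction over the charge structure), but both directions contain genuine gaps, and you concede as much for sufficiency.

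In the necessity direction, the decisive step --- ``the remaining $\partial W/\partial X_j$ for $X_j$ inside $P_1$ or $P_2$ are simultaneously solvable in a nontrivial way precisely because no joining link exists'' --- is asserted rather than proven, and it is exactly here that the subtle clause of condition (2) lives. A monomial that depends only on the $P_1\cup P_2$ variables but is linear in some \emph{outside} variable is a joining link in your loose sense, yet it does not restore compactness when its target is $v$ or a target of the joined objects; your dichotomy (link exists vs.\ link absent) misclassifies these. The paper's own example following the $S_{5,2}$ discussion, $W = (X_1^2+X_3^2)X_2 + X_2^3 + X_4^2X_1 + X_5^2X_4 + X_1X_3X_4$, is precisely a case where a joining monomial is present but points at a forbidden node and $W$ is still non-compact. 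Relatedly, you never verify that the partials with respect to the zeroed-out variables vanish on your locus $\cV$; the only monomials threatening this are pointing links out of $P_1\cup P_2$, which is where the forbidden-target condition must enter your argument and currently does not. There is also a quantifier gap: you fix one member $W$ and one graph, but a given $W$ admits several graphs, so showing that a $W$ with a bad graph is non-compact does not yet give the contrapositive of necessity; one must argue over all graph assignments of the generic member.

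In the sufficiency direction you explicitly defer the proof (``the hard part will be\dots''), but this \emph{is} the content of the Kreuzer--Skarke theorem. Your inductive step ``forces the pointer variables to vanish as well'' fails in exactly the configuration the theorem addresses: two pointers $X_{i_1},X_{i_2}$ at the same node $v$ contribute $W_v \supset c_1 X_{i_1}^{m_{i_1}} + c_2 X_{i_2}^{m_{i_2}}$, which does not force vanishing; one must invoke the joining link, and then recursively control where that link is allowed to point, including chains of links between links. A genericity argument on the Hessian determinant does not substitute for this combinatorial bookkeeping --- the finite-dimensionality criterion certifies compactness of a fixed $W$ but gives no mechanism for proving the determinant is generically nonzero; that mechanism is the missing induction. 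As written, the proposal establishes neither implication.
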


This theorem can be used to describe all possible gc families with fixed $n$.  At each $n$ we construct all the skeleton graphs.  Some will already fulfill the conditions of the theorem and so correspond to gc families with $c < 3n$.  For those skeletons that do not fulfill the conditions of the theorem, we must add links and pointer links in all possible ways so as to fulfill the second criterion of the theorem.  This leads to a complete (in general redundant) list of gc families.


Before we leave the general discussion, we prove one more result that will be useful to us.
\begin{thm}  \label{thm:chargebound} Let $m_i$ denote the exponents for the roots and pointers of a LG skeleton.  Then the total R-charge
\begin{align*}
\qtot = \sum_{i=1}^n q_i   \le \sum_{i=1}^n \frac{1}{m_i}~,
\end{align*}
with equality if and only if every variable is a root.
\end{thm}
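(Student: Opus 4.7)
The plan is to use quasi-homogeneity monomial by monomial: each variable $X_i$, being either a root or a pointer (by the earlier lemma applied to any compact family), participates in a monomial that imposes a single linear constraint on $q_i$, from which the bound $q_i\le 1/m_i$ is immediate, with strict inequality in the pointer case.

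Concretely, I would proceed as follows. Fix a skeleton with exponents $m_i$, and consider each variable $X_i$ in turn.
\begin{enumerate}
\item If $X_i$ is a root, then $W\supset X_i^{m_i}$, and quasi--homogeneity $W(t^{q_j}X_j)=tW(X)$ applied to this monomial gives $m_i q_i=1$, i.e.\ $q_i=1/m_i$.
\item If $X_i$ is a pointer, then there exists $j\neq i$ with $W\supset X_i^{m_i}X_j$, and quasi--homogeneity gives $m_i q_i+q_j=1$, hence
\begin{align*}
q_i=\frac{1-q_j}{m_i}=\frac{1}{m_i}-\frac{q_j}{m_i}<\frac{1}{m_i},
\end{align*}
where the strict inequality uses $q_j>0$, which holds by the preceding proposition guaranteeing $0<q<1/2$ for all variables in an IR-reduced compact family.
\end{enumerate}
In either case $q_i\le 1/m_i$, with equality precisely when $X_i$ is a root.

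Summing these inequalities over all $n$ variables gives
\begin{align*}
\qtot=\sum_{i=1}^n q_i \le \sum_{i=1}^n \frac{1}{m_i},
\end{align*}
and the sum of non-strict inequalities is an equality if and only if each individual inequality is, i.e.\ if and only if every variable is a root. There is no substantive obstacle: the only subtlety worth flagging is the invocation of $q_j>0$, which is why the result must be stated for skeletons arising from LG families (so that the earlier charge-positivity proposition applies) rather than for arbitrary quasi--homogeneous data.
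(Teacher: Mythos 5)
Your proof is correct and follows essentially the same route as the paper: both derive $q_i = 1/m_i$ for roots and $q_i = (1-q_{\pi(i)})/m_i < 1/m_i$ for pointers from quasi--homogeneity, invoke the positivity of the target charge for the strict inequality, and sum over all variables. The only cosmetic difference is that the paper organizes the sum by splitting roots from pointers up front, while you argue variable by variable; the content is identical.
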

\begin{proof}  Let $X_\alpha$, $\alpha = 1,\ldots, r$ denote the roots of the skeleton, and let $X_I$, $I=1,\ldots, n-r$ denote the pointers, with $X_I$ pointing at $X_{\pi(I)}$.  In particular, it means that $W \supset X_\alpha^{m_\alpha}$ for every $\alpha$ and $W \supset X_I^{m_I} X_{\pi(I)}$.  Thus,  $q_\alpha = \frac{1}{m_\alpha}$ and $q_I = \frac{1}{m_I} (1-q_{\pi(I)})$, and
\begin{align*}
\qtot = \sum_{\alpha=1}^r \frac{1}{m_\alpha} + \sum_{I=1}^{n-r} \frac{1}{m_I} (1- q_{\pi(I)}) = \sum_{i=1}^n \frac{1}{m_i} - \sum_{I=1}^{n-r} \frac{q_{\pi(I)}}{m_I}~.
\end{align*}
Since $0<q_{\pi(I)} < 1/2$ for every pointer variable, the result follows.
\end{proof}
This has the following corollary that will be useful in characterizing infinite families of exponents. 
\begin{enumerate}
\item If $n=3$, and for some pair $i\neq j$  $\M{i} + \M{j} <1/2$, then the remaining exponent is bounded.  This only fails if the pair belongs to
\begin{align}
\label{cor:n3}
(m_i,m_j) \in \{ (2,\ast)~,~(3,3)~,~(3,4)~,~(3,5)~,~(3,6)~,~(4,4)\}~.
\end{align}
\item If $n=4$, and for some triplet $i\neq j\neq k$  $\M{i} + \M{j} + \M{k} <1$, then the remaining exponent is bounded.   The only way this fails is if the triplet has the form
\begin{align}
\label{cor:n4}
(m_i,m_j,m_k) \in \{ (k_i,k_j,2)~, (3,3,3)\}~,
\end{align}
where $\frac{1}{k_i} + \frac{1}{k_j} \ge \frac{1}{2}$.
\item If $n=5$, and for some quadruplet $i\neq j \neq k\neq l$  $\M{i} + \M{j} + \M{k} +\M{l} < 3/2$, then the remaining exponent is bounded.  The only way this fails is if the quadruplet has the form
\begin{align}
\label{cor:n5}
(m_i,m_j,m_k,m_l) \in \{ (k_i,k_j,k_k,2)\}~,
\end{align}
where $\frac{1}{k_i} + \frac{1}{k_j} + \frac{1}{k_l} \ge 1$.
\end{enumerate}
We also note that $\qtot$ is polynomial in $\mu_i = \M{i}$, and in fact is linear in each of the $\mu_i$.  It is then easy to see that $\qtot$ is a monotonically decreasing function with respect to each of the exponents.

\section{The feasibility of links} \label{s:links}
Consider a connected graph $G$ on $n$ nodes for a superpotential $W$. The skeleton of the graph determines the charges $q_i$ in terms of the exponents.  If a graph contains a link $L$ linking, say, variables $X_1$ and $X_2$, then $W \supset X_1^{p_1} X_2^{p_2}$, i.e. there exist non-negative integers $p_1,p_2$ such that
\begin{align}
p_1 q_1 + p_2 q_2 = 1~.
\end{align}
This is a non-trivial condition on the exponents, and for many exponents there will be no solution.  We can cast this into a standard counting problem as follows.  Let $q_1 = r_1/d$ and $q_2 = r_2/d$, where $r_1, r_2, d$ are integers and $\gcd(r_1,r_2,d) = 1$.  We then have to solve the equation
\begin{align}
p_1 r_1 + p_2 r_2 = d~
\end{align}
for non-negative $p_1$ and $p_2$.  A necessary condition for a solution is that $\gcd(r_1,r_2) | d$.  When this holds, we can choose $r_1,r_2,d$ so that $\gcd(r_1,r_2) =1$, and we have the classic Frobenius  $2$-coin change problem and its solution:  given two relatively prime positive integers $r_1$ and $r_2$, the largest $d$ which cannot be represented as $p_1 r_1+p_2r_2$ is
\begin{align}
d_{\text{max}} = r_1 r_2 - r_1 -r_2~.
\end{align}
If $d < d_{\text{max}}$ it may or may not be representable; indeed, we have the result that exactly half of the integers between $1$ and $(r_1-1)(r_2-1)$ are representable.\footnote{A really nice description of this classic statement may be found in the first chapter of~\cite{Beck:2015rs}.}

If $G$ contains a pointing link with monomial $W \supset X_1^{p_1} X_2^{p_2} X_3$, there is a similar problem to solve, with
\begin{align}
p_1 q_1 + p_2 q_2 = 1-q_3~.
\end{align}
Hence, we have a collection of Frobenius coin change problems, one for every link.  Each of these is a relatively simple $2$-coin change problem.  

 $G$ may contain links between links (this is the case when two pointing links point to the same node).  When this holds, we must consider a $4$-coin change problem.  In this case there is no known closed form expression for the largest non-representable integer.
We make the following definition.
\begin{defn}
A graph $G$ and a fixed set of exponents $\bM = (m_1,m_2,\ldots,m_n)$ is \textit{feasible} if and only if there is a simultaneous solution to the set of Frobenius coin change problems.
\end{defn}

\subsection{Examples of feasibility constraints} \label{ss:pointedfeasibility}

There are examples of skeletons and exponents for which there are no feasible graphs.  Consider  $n=4$ and charges
\begin{align}
\label{eq:nofeasibility}
q_1 & = \frac{1}{3}~,&
q_2 & = \frac{2}{5}~,&
q_3 & = \frac{4}{25}~,&
q_4 & = \frac{1}{5}~~,& \implies
c & = \frac{136}{25} < 6~.
\end{align}
The most general superpotential is given by
\begin{align}
W & = a_1 X_1^3 + (a_2 X_2^2 +a_3 X_3^5) X_4 + a_4 X_2 X_4^3 + a_5 X_4^5~.
\end{align}
For any compact $W$ the skeleton is constrained as follows: $X_1$ must be a root, $X_2$ and $X_3$ must be pointing at $X_4$.  Finally, $X_4$ is either a root or a pointer at $2$.  In either case there must be a link between $X_2$ and $X_3$; the link may point at $X_1$.   So, we have four possible graphs:
\begin{align}
\label{eq:feasible}
\begin{tikzpicture}
\node [] (1)   [label=below:$m_2$] {$\circ$};
\node [] (2)  [right=of  1, label=below:$m_4$] {$\circ$};
\node [] (3) [right=of 2, label = below: $m_3$] {$\circ$};
\node [] (4) [above= 7mm of 2, label = above :$m_1$] {$\circ$};
\draw [->]  (3) -- (2);
\draw [->] (1) -- (2);
\draw[dashed] (3) to [out = 140, in = 40] (1);
\end{tikzpicture}
&&
\begin{tikzpicture}
\node [] (1)   [label=below:$m_2$] {$\circ$};
\node [] (2)  [right=of  1, label=below:$m_4$] {$\circ$};
\node [] (3) [right=of 2, label = below: $m_3$] {$\circ$};
\node [] (4) [above= 7mm of 2, label = above :$m_1$] {$\circ$};
\draw [->]  (3) -- (2);
\draw [->] (1) to [out=20, in =160] (2);
\draw[->] (2)  to [out=200, in=-20] (1);
\draw[dashed] (3) to [out = 140, in = 40] (1);
%
\end{tikzpicture}
\nonumber\\
\begin{tikzpicture}
\node [] (1)   [label=below:$m_2$] {$\circ$};
\node [] (2)  [right=of  1, label=below:$m_4$] {$\circ$};
\node [] (3) [right=of 2, label = below: $m_3$] {$\circ$};
\node [] (L1) [above=2mm of 2] {$\bullet$};
\node [] (4) [above= 5mm of L1, label = above :$m_1$] {$\circ$};
\draw [->]  (3) -- (2);
\draw [->] (1) -- (2);
\draw[dashed] (L1) to [out = 0, in =90] (3);
\draw[dashed] (L1) to [out = 180, in = 90] (1);
\draw[dashed] (L1) to [out = 90, in = -90] (4);
\end{tikzpicture}
&&
\begin{tikzpicture}
\node [] (1)   [label=below:$m_2$] {$\circ$};
\node [] (2)  [right=of  1, label=below:$m_4$] {$\circ$};
\node [] (3) [right=of 2, label = below: $m_3$] {$\circ$};
\node [] (L1) [above=2mm of 2] {$\bullet$};
\node [] (4) [above= 5mm of L1, label = above :$m_1$] {$\circ$};
\draw [->]  (3) -- (2);
\draw [->] (1) to [out=20, in =160] (2);
\draw[->] (2)  to [out=200, in=-20] (1);
\draw[dashed] (L1) to [out = 0, in =90] (3);
\draw[dashed] (L1) to [out = 180, in = 90] (1);
\draw[dashed] (L1) to [out = 90, in = -90] (4);
\end{tikzpicture}
\end{align}
However, with the charges as in~(\ref{eq:nofeasibility}) no graph is feasible.

As we will see below, a graph $G$ often allows for an infinite set of exponents consistent with $c<6$; however, it may be that only a finite subset of those exponents is feasible.  For instance, we give an example of an infinite family of exponents where only one member is feasible.  We start with the skeleton superpotential
\begin{align}
W = (X_1^4 + X_2^4) X_3 + X_3^k \implies q_1 = q_2 = \frac{k-1}{4k}~~~,q_3 = \frac{1}{k}~,
\end{align}
with $k\ge 4$.
To produce a gc family from this skeleton we need a link between $X_1$ and $X_2$, which is possible if and only if
\begin{align}
s = \frac{4k}{k-1}~
\end{align}
is a positive integer.  It is easy to see that $k=5$ is the only feasible value, in which case this LG family is deformation--equivalent to the simpler and decomposable $W = X_1^5 + X_2^5 + X_3^5$.

As another example of linking subtleties, we can see that for a fixed skeleton and choice of exponents it may be that feasible graphs necessarily involve a pointing link.  Consider the charges
\begin{align}
q_1 & = \frac{1}{3}~,&
q_2 & = \frac{2}{5}~,&
q_3 & = \frac{4}{45}~,&
q_4 & = \frac{1}{5}~,&\implies c  = \frac{88}{15} < 6~.
\end{align}
In this case, the most general superpotential is
\begin{align}
W = a_1X_1^3 + (  a_2 X_2^2+a_3X_3^9) X_4 + (a_4 X_2 X_3^3 + a_5 X_3^3 X_4^2) X_1 + a_6 X_4^3 X_2 + a_7 X_4^5~.
\end{align}
There are feasible graphs (the two bottom graphs in~(\ref{eq:feasible})), but they necessarily involve a pointing link.  Note that the list of singularities in~\cite{Yau:2003sst} does not involve any pointing links (no family contains a monomial in more than two variables) and is therefore incomplete. 

\subsection{Pointing links and disconnected skeletons}
The combinatoric challenge of LG classification is in the analysis of links and pointing links.  There is one aspect of the problem that is simpler than it might first appear.  While pointing links can connect distinct connected components of a skeleton graph, for $n\le 5$ the possibilities for this are limited.

As we will see in the next section from the classification of connected skeletons, for $n\le 2$ there are no links, while for $n\le 3$ there is at most one link.  An $n=4$ connected skeleton has at most $3$ links.  Since we must have $n\le 5$ in total, there are just a few possibilities in which a pointing link can connect  disconnected components of a skeleton:
\begin{enumerate}
\item a connected skeleton with $n=3$ has a link that points at the $n=1$ skeleton;
\item a connected skeleton with $n=3$ has a link that points at a connected $n=2$ skeleton;
\item a connected skeleton with $n=4$ has $k=1,2,3$ links that point at the $n=1$ skeleton.  
\end{enumerate}

\section{Connected skeletons in $n \le 5$ variables} \label{s:skeletons}
In this section we classify all connected LG skeletons in $n\le 5$ variables.  These results are surely known to many; see, for instance~\cite{Hertling:2012ku}, however, we have not been able to find the $n=5$ case in the literature.

A LG skeleton is a directed graph with $n$ nodes and no edge from a node to itself. For each node let $\ell_{\text{in}}$ and $\ell_{\text{out}}$ denote, respectively, the number of incoming and outgoing edges.  In a LG skeleton each node has at most one outgoing edge: $\ell_{\text{out}}(v) = 0$ when $v$ is a root, and $\ell_{\text{out}}(v) = 1$ when $v$ is a pointer.  If the number of roots $r \ge 2$ then the skeleton graph is necessarily disconnected.  For a LG skeleton with $p$ pointers we split the nodes $v$ into the $r=n-p$ roots $\{u_1,\ldots,u_r\}$ and the $p$ pointers $\{w_1,\ldots,w_r\}$.  Evidently
\begin{align}
\sum_{v} \ell_{\text{in}}(v)  = \sum_{u} \ell_{\text{in}}(u) + \sum_{w} \ell_{\text{in}} (w) = p~.
\end{align}
Thus, every LG skeleton with $p$ pointers yields a partition of $p$ into $n$ non-negative integers 
$[\ell_{\text{in}}(u_1),\ldots,\ell_{\text{in}}(u_r);\ell_{\text{in}}(w_1),\ldots,\ell_{\text{in}}(w_p)]$.  Without loss of generality we may assume $\ell_{\text{in}}(u_1) \ge \ell_{\text{in}}(u_2)\ge\cdots \ge \ell_{\text{in}}(u_r)$ and similarly for the pointers.  


Suppose $G$ is a graph for a gc LG family.  When $G$ is disconnected, there is a locus in the parameter space where the theory is compact and can be written as a direct product of SCFTs, one for each connected component.  Hence, to classify the LG theories up to marginal deformations we can restrict attention to connected graphs.

Every graph $G$ has a skeleton subgraph $S$ obtained by dropping all of the links in $G$.   The resulting skeleton may either be connected or disconnected.  Turning this around, given a disconnected skeleton with components $S_1$ and $S_2$, the only element in the construction that might yield a connected graph $G$ involves a pointing link from either $S_1$ to $S_2$ or from $S_2$ to $S_1$.  So, to classify the indecomposable LG SCFTs with $c <6$ we can proceed as follows:
\begin{enumerate}
\item  construct all connected skeletons with $n \le 5$ nodes and $1$ root;
\item construct all connected skeletons with $n\le 5$ nodes and $0$ roots;
\item for every singular skeleton (i.e. those with $\ell_{\text{in}}(v) >1$) add the required links;
\item find restrictions on exponents for $n=3,4,5$ such that $c<6$ and every link is feasible;
\item for every singular skeleton allow for any possible pointing links from the skeleton to itself (this will in general require links between links) and check for feasibility;
\item construct all connected graphs where a disconnected skeleton with $c<6$ is made into a connected graph by pointing links.
\end{enumerate}
In what follows we will carry out these steps with one important simplification: we will only be interested in indecomposable theories.  An indecomposable gc--family has charges that do not allow $W$ to be written as a sum of two non-interacting terms for any values of the complex coefficients away from the singular locus.  We will work with a slightly more general notion.
\begin{defn}  We will say that a graph is reducible if one of the following holds:
\begin{enumerate}
\item the graph has more than $1$ root;
\item the graph has $1$ root, but the exponents are such that some pointer variable can appear in the superpotential as a root ;
\item the graph has $0$ roots, but the exponents are such that some pointer variable can appear in the superpotential as a root.
\end{enumerate}
\end{defn}
In the first and second case the gc--family is decomposable; in the third case if precisely one pointer can appear as a root, the gc--family may not be decomposable, but it is then necessarily equivalent to a gc--family obtained from a graph with $1$ root.  So, to classify indecomposable theories by our procedure it is sufficient to describe irreducible theories.

In the remainder of this section we will construct all the connected skeletons, and we will indicate all the required links, but we will not illustrate the possible pointing links in the diagrams.  In the following sections we will discuss the constraints on exponents, combinatorics of links and pointing links.

\subsection{The connected skeletons  with $n\le 5$}
We will denote the connected skeletons in $k$ nodes by $S_{k,\alpha}$, where $\alpha$ is a label for the graph type at fixed $k$.  For each skeleton we will include the links necessary to yield a gc family, but at this point we will not distinguish the pointing links.

\subsubsection*{$n\le 2$ skeletons}
These are straightforward.  There are no links, and the skeletons are
\begin{align}
\label{eq:n2LG}
\arraycolsep=5mm
\begin{array}{c|c|c}
S_{1,1}	&  S_{2,1}		& S_{2,2} \\[2mm] \hline
\begin{tikzpicture}[baseline=1mm]
\node [] (1) [label=below:$m_1$]{$\circ$};
\end{tikzpicture} &
\begin{tikzpicture}[baseline=1mm]
\node [] (1) [label=below:$m_1$]{$\circ$};
\node [] (2) [right= of 1, label=below:$m_2$]{$\circ$};
\draw[->] (1) -- (2);
\end{tikzpicture} &
\begin{tikzpicture}[baseline=1mm]
\node[] (1) [label=below:$m_1$]{$\circ$};
\node[] (2) [right = of 1, label = below:$m_2$]{$\circ$};
\draw[->] (1) -- (2);
\draw[->] (2) to [out =150, in =30] (1);
\end{tikzpicture} \\
W = X_1^m &
W = X_1^{m_1}X_2 + X_2^{m_2} &
W = X_1^{m_1}X_2 + X_2^{m_2}X_1 \\
\begin{aligned}q_1 = \frac{1}{m_1} 	\end{aligned}
&\begin{aligned} q_1& = \frac{m_2 -1}{m_1m_2} \\ q_2& = \frac{1}{m_2} \end{aligned} 
&\begin{aligned} q_1 &= \frac{m_2 -1}{m_1m_2-1} \\	q_2 &= \frac{m_1-1}{m_1m_2-1}  \end{aligned}
\end{array}
\end{align}

\subsubsection*{$n=3$ skeletons}
These are easy enough to construct directly, but with the $n=4,5$ cases in view, let us describe a procedure for obtaining all of the connected skeletons.  We first note that a connected skeleton has either $1$ root or $0$ roots.  Second, each node $v$ has $\ell_{\text{in}}(v)$ arrows pointing to it, which requires $\binom{\ell_{\text{in}}(v)}{2}$ links.  So, we can organize the skeletons by the number of roots and the partition of $p$ among the $\ell_{\text{in}}(v)$.
Applying this to $n=3$, we first consider the case of $1$ root, which can have at most $1$ link:
\begin{align}
\arraycolsep=5mm
\begin{array}{cc}
S_{3,1} & S_{3,2} \\[2mm] \hline
\begin{tikzpicture}
\node[] (1) [label=below:$m_1$]{$\circ$};
\node[] (2) [right =of 1, label = below:$m_2$]{$\circ$};
\node[] (3) [right = of 2, label = below:$m_3$]{$\circ$};
\draw[->] (1)--(2);
\draw[->] (2)--(3);
\end{tikzpicture} &
\begin{tikzpicture}
\node[] (1) [label=below:$m_1$]{$\circ$};
\node[] (2) [right =of 1, label = below:$m_2$]{$\circ$};
\node[] (3) [right = of 2, label = below:$m_3$]{$\circ$};
\draw[->] (1)--(2);
\draw[->] (3)--(2);
\draw[dashed] (1) to [out = 45, in = 135] (3);
\end{tikzpicture}\\[2mm]
X_1^{m_1}X_2 + X_2^{m_2}X_3 + X_3^{m_3} 
&
X_1^{m_1}X_2 + X_2^{m_2} + X_3^{m_3}X_2+ X_1^{p_1}X_3^{p_3}
\\[2mm]
\begin{aligned}
q_1 &= \frac{m_2m_3-m_3+1}{m_1m_2m_3} \\[2mm]
q_2 &= \frac{m_3-1}{m_2m_3}\\[2mm]
q_3 &= \frac{1}{m_3}  
\end{aligned}
&
\begin{aligned}
q_1 &= \frac{m_2 -1}{m_1m_2} \\[2mm]
q_2 &= \frac{1}{m_2}\\[2mm]
q_3 &= \frac{m_2-1}{m_2m_3}
\end{aligned}
\end{array}
\end{align}
Next, we consider the cases without roots.
\begin{align}
\arraycolsep=5mm
\begin{array}{cc}
S_{3,3} & S_{3,4}\\[2mm]
\begin{tikzpicture}
\node[] (1) [label=above:$m_1$]{$\circ$};
\node[] (2) [below left = 3mm and 3mm of 1,label=below:$m_2$]{$\circ$};
\node[] (3) [below right = 3mm and 3mm  of 1, label=below:$m_3$]{$\circ$};
\draw[->] (1)--(2);
\draw[->] (2)--(3);
\draw[->] (3) -- (1);
\end{tikzpicture} &
\begin{tikzpicture}
\node[] (1)  [label=below:$m_1$]{$\circ$};
\node[] (2) [right of =1, label=below:$m_2$]{$\circ$};
\node[] (3) [right of=2,label=below:$m_3$]{$\circ$};
\draw[->] (1)--(2) ;
\draw[->] (2) to [out = -135, in = -45] (1);
\draw[->] (3) -- (2);
\draw[dashed] (1) to [out = 45, in = 135] (3);
\end{tikzpicture} \\[2mm]
X_1^{m_1}X_2 + X_2^{m_2}X_3 + X_3^{m_3}X_1 &
X_1^{m_1}X_2 + X_2^{m_2}X_1 + X_3^{m_3}X_2 + X_1^{p_1}X_3^{p_3}\\[2mm]
\begin{aligned}
q_1 &= \frac{1-m_3 + m_2m_3}{m_1m_2m_3+1} \\[2mm]
q_2 &= \frac{1 - m_1 + m_1m_3}{m_1m_2m_3 + 1} \\[2mm]
q_3 &= \frac{1-m_2 + m_1m_2}{m_1m_2m_3 + 1}
\end{aligned}&
\begin{aligned}
q_1 &= \frac{m_2-1}{m_1m_2-1}\\[2mm]
q_2 &= \frac{m_1 -1}{m_1m_2 -1}\\[2mm]
q_3 &= \frac{m_1m_2 - m_1}{m_3(m_1m_2-1)}
\end{aligned}
\end{array}
\end{align}

\subsubsection*{$n=4$ skeletons}
At this point we hope that the reader sees the translation from the graphic notation to the superpotential.  It is then easy to find the charges in terms of the exponents.  So, for the remaining cases we will simply present the graphs.  

The skeletons with $1$ root, organized by the number of links, are
\begin{align}
\arraycolsep=5mm
\begin{array}{cc}
S_{4,1}	&	S_{4,2}  \\[2mm]
\begin{tikzpicture}
\node[] (1)  [label=below:$m_4$]{$\circ$};
\node[] (2) [right of =1, label=below:$m_1$]{$\circ$};
\node[] (3) [right of=2,label=below:$m_2$]{$\circ$};
\node[] (4) [right of=3,label=below:$m_3$]{$\circ$};
\draw[->] (1)--(2) ;
\draw[->] (2) --(3);
\draw[->] (3) -- (4);
\end{tikzpicture}
&
\begin{tikzpicture}
\node[] (2) [ label=below:$m_1$]{$\circ$};
\node[] (3) [right of=2,label=below:$m_2$]{$\circ$};
\node[] (4) [right of=3,label=below:$m_3$]{$\circ$};
\node[] (1)  [right of = 4, label=below:$m_4$]{$\circ$};
\draw[->] (1)--(4) ;
\draw[->] (2) --(3);
\draw[->] (4) -- (3);
\draw[dashed] (2) to [out = 45, in = 135] (4);
\end{tikzpicture} \\[4mm]
S_{4,3}	&	S_{4,4} \\[2mm]
\begin{tikzpicture}
\node[] (3)  [label=below:$m_2$]{$\circ$};
\node[] (1) [above left = 3mm and 3mm of 3, label=above:$m_4$]{$\circ$};
\node[] (2) [below left = 3mm and 3mm of 3,label=below:$m_1$]{$\circ$};
\node[] (4) [right of=3,label=below:$m_3$]{$\circ$};
\draw[->] (2) -- (3);
\draw[->] (1) --(3);
\draw[->] (3) -- (4);
\draw[dashed] (1) -- (2);
\end{tikzpicture}
&
\begin{tikzpicture}
\node[] (4)  [label=below:$m_2$]{$\circ$};
\node[] (1) [below left = 5mm and 5mm  of 4, label=below:$m_1$]{$\circ$};
\node[] (2) [above =5mm of 4,label=above:$m_4$]{$\circ$};
\node[] (3) [below right =5mm and 5mm of 4,label=below:$m_3$]{$\circ$};
\draw[->] (2) -- (4);
\draw[->] (1) --(4);
\draw[->] (3) -- (4);
\draw[dashed] (1) -- (2);
\draw[dashed] (2) -- (3);
\draw[dashed] (1) -- (3);
\end{tikzpicture}
\end{array}
\end{align}
The skeletons without roots are
\begin{align}
\arraycolsep=4mm
\begin{array}{cc}
S_{4,5}	& S_{4,6}\\[2mm]
\begin{tikzpicture}
\node[] (1)  [label=below:$m_1$]{$\circ$};
\node[] (2) [right = of 1, label=below:$m_2$]{$\circ$};
\node[] (3) [above = of  2,label=above:$m_3$]{$\circ$};
\node[] (4) [left =of 3,label=above:$m_4$]{$\circ$};
\draw[->] (1) -- (2);
\draw[->] (2) --(3);
\draw[->] (3) -- (4);
\draw[->] (4)--(1);
\end{tikzpicture} &
\begin{tikzpicture}
\node[] (1)  [label=below:$m_4$]{$\circ$};
\node[] (2) [right = of 1, label=below:$m_3$]{$\circ$};
\node[] (3) [right = of  2,label=below:$m_2$]{$\circ$};
\node[] (4) [right =of 3,label=below:$m_1$]{$\circ$};
\draw[->] (1) -- (2);
\draw[->] (2) --(3);
\draw[->] (3) -- (4);
\draw[->] (4) to [out = 135, in= 45] (3);
\draw[dashed] (4) to [out = -135, in = -45] (2);
\end{tikzpicture} \\[4mm]
S_{4,7}	&S_{4,8}\\[2mm]
\begin{tikzpicture}
\node[] (4) [label=above:$m_4$]{$\circ$};
\node[] (1) [below = 3mmof 4, label=left:$m_1$]{$\circ$};
\node[] (2) [below left = 3mm and 2mm of 1,label=below:$m_2$]{$\circ$};
\node[] (3) [below right = 3mm and 2mm  of 1, label=below:$m_3$]{$\circ$};
\draw[->] (4)--(1);
\draw[->] (1)--(2);
\draw[->] (2)--(3);
\draw[->] (3) -- (1);
\draw[dashed] (3) -- (4);
\end{tikzpicture} &
\begin{tikzpicture}
\node[] (1) [label=below:$m_3$]{$\circ$};
\node[] (2) [right =of 1, label = below: $m_2$]{$\circ$};
\node[] (3) [right = of 2, label = below: $m_1$]{$\circ$};
\node[] (4) [right = of 3, label = below: $m_4$]{$\circ$};
\draw[->] (1) -- (2);
\draw[->] (4) -- (3);
\draw[->] (2)--(3);
\draw[->] (3) to [out = 145, in = 35] (2);
\draw[dashed] (1) to [out = 45, in = 135] (3);
\draw[dashed] (2) to [out = -45, in = -135] (4);
\end{tikzpicture} 
\end{array}
\end{align}
as well as
\begin{align}
S_{4,9} \qquad\quad~~ \nonumber\\[2mm]
\begin{tikzpicture}
\node[] (1) [label=below: $m_4$]{$\circ$};
\node[] (2) [right = of 1, label = below: $m_3$]{$\circ$};
\node[] (3) [right = of 2, label = below: $m_1$]{$\circ$};
\node[] (4) [above = of 2, label = above: $m_2$]{$\circ$};
\draw[->] (1) -- (4);
\draw[->] (2) -- (4);
\draw[->] (3) -- (4);
\draw[->] (4) to [out = 0, in = 90] (3);
\draw[dashed] (1) -- (2);
\draw[dashed] (2) -- (3);
\draw[dashed] (1) to [out = -45, in = -135] (3);
\end{tikzpicture}
\end{align}

\subsubsection*{$n=5$ skeletons}
The skeletons with $1$ root and $1$ link are
\begin{align}
S_{5,1} && \begin{tikzpicture}[baseline=-2mm]
\node[] (1) [label=below: $m_5$]{$\circ$};
\node[] (2) [right = of 1, label = below: $m_4$]{$\circ$};
\node[] (3) [right = of 2, label = below: $m_1$]{$\circ$};
\node[] (4) [right = of 3, label = below: $m_2$]{$\circ$};
\node[] (5) [right = of 4, label = below: $m_3$]{$\circ$};
\draw[->] (1) -- (2);
\draw[->] (2) -- (3);
\draw[->] (3) -- (4);
\draw[->] (4) -- (5);
\end{tikzpicture}  \nonumber\\
S_{5,2} && 
\begin{tikzpicture}[baseline=-2mm]
\node[] (1) [label=below: $m_5$]{$\circ$};
\node[] (2) [right = of 1, label = below: $m_4$]{$\circ$};
\node[] (3) [right = of 2, label = below: $m_1$]{$\circ$};
\node[] (4) [right = of 3, label = below: $m_2$]{$\circ$};
\node[] (5) [right = of 4, label = below: $m_3$]{$\circ$};
\draw[->] (1) -- (2);
\draw[->] (2) -- (3);
\draw[->] (3) -- (4);
\draw[->] (5) -- (4);
\draw[dashed] (5) to [out = 135, in = 45] (3);
\end{tikzpicture} \nonumber\\
S_{5,3} &&
\begin{tikzpicture}[baseline=-2mm]
\node[] (1) [label=below: $m_4$]{$\circ$};
\node[] (2) [right = of 1, label = below: $m_1$]{$\circ$};
\node[] (3) [right = of 2, label = below: $m_2$]{$\circ$};
\node[] (4) [right = of 3, label = below: $m_3$]{$\circ$};
\node[] (5) [right = of 4, label = below: $m_5$]{$\circ$};
\draw[->] (1) -- (2);
\draw[->] (2) -- (3);
\draw[->] (4) -- (3);
\draw[->] (5) -- (4);
\draw[dashed] (2) to [out = 45, in = 135] (4);
\end{tikzpicture} \nonumber\\
S_{5,4} &&
\begin{tikzpicture}[baseline=-8mm]
\node[] (1) [label=above: $m_4$]{$\circ$};
\node[] (2) [below =15mm of 1, label =below: $m_5$]{$\circ$};
\node[] (3) [below right = 5mm and 5mm of 1, label = below: $m_1$]{$\circ$};
\node[] (4) [right = 10mm of 3, label = below: $m_2$]{$\circ$};
\node[] (5) [right = 10mm of 4, label = below: $m_3$]{$\circ$};
\draw[->] (1) -- (3);
\draw[->] (2) -- (3);
\draw[->] (3) -- (4);
\draw[->] (4) -- (5);
\draw[dashed] (2) -- (1);
\end{tikzpicture} \nonumber\\
S_{5,5}  &&
\begin{tikzpicture}[baseline=2mm]
\node[] (1) [label=above: $m_5$]{$\circ$};
\node[] (2) [below right = 3mm and 10mm of 1, label =left: $m_1$]{$\circ$};
\node[] (3) [above right = 3mm and 10mm of 1, label = right: $m_4$]{$\circ$};
\node[] (4) [right = 30mm of 1, label = below: $m_2$]{$\circ$};
\node[] (5) [right =10mm of 4, label = below: $m_3$]{$\circ$};
\draw[->] (1) -- (2);
\draw[->] (2) -- (4);
\draw[->] (3) -- (4);
\draw[->] (4) -- (5);
\draw[dashed] (2) -- (3);
\end{tikzpicture}
\end{align} 
There is one skeleton with two links and one with three links:
\begin{align}
S_{5,6}&&
\begin{tikzpicture}[baseline=3mm]
\node[] (1) [label=below: $m_1$]{$\circ$};
\node[] (2) [above = of 1, label =above: $m_4$]{$\circ$};
\node[] (3) [right = of 1, label = below: $m_2$]{$\circ$};
\node[] (4) [above= of 3, label = above: $m_5$]{$\circ$};
\node[] (5) [right = of 3, label = below: $m_3$]{$\circ$};
\draw[->] (1) -- (3);
\draw[->] (2) -- (3);
\draw[->] (3) -- (5);
\draw[->] (4) -- (5);
\draw[dashed] (2) -- (1);
\draw[dashed] (3) -- (4);
\end{tikzpicture} \nonumber\\
S_{5,7} &&
\begin{tikzpicture}[baseline=3mm]
\node[] (1) [label=below: $m_1$]{$\circ$};
\node[] (2) [right = of 1, label =below: $m_3$]{$\circ$};
\node[] (3) [right = of 2, label = below: $m_4$]{$\circ$};
\node[] (4) [right = of 3, label = below: $m_5$]{$\circ$};
\node[] (5) [above =of 2, label = above: $m_2$]{$\circ$};
\draw[->] (1) -- (5);
\draw[->] (2) -- (5);
\draw[->] (3) -- (5);
\draw[->] (4) -- (3);
\draw[dashed] (2) -- (1);
\draw[dashed] (3) -- (2);
\draw[dashed] (1) to [out = -45, in = -135] (3);
\end{tikzpicture} \nonumber\\
S_{5,8} &&
\begin{tikzpicture}[baseline=-12mm]
\node[] (1) [label=left: $m_4$]{$\circ$};
\node[] (2) [below = of 1, label = left: $m_1$]{$\circ$};
\node[] (3) [below = of 2, label = left: $m_5$]{$\circ$};
\node[] (4) [right = of 2, label = below: $m_2$]{$\circ$};
\node[] (5) [right =of 4, label = below: $m_3$]{$\circ$};
\draw[->] (1) -- (4);
\draw[->] (2) -- (4);
\draw[->] (3) -- (4);
\draw[->] (4) -- (5);
\draw[dashed] (2) -- (1);
\draw[dashed] (3) -- (2);
\draw[dashed] (1) to [out = -160, in = 160] (3);
\end{tikzpicture}
\end{align}
Finally, we have a skeleton with six links:
\begin{align}
S_{5,9} &&
\begin{tikzpicture}[baseline=-12mm]
\node[] (1) [label=above: $m_1$]{$\circ$};
\node[] (5) [below =of 1, label = above left: $m_2$]{$\circ$};
\node[] (2) [left = of 5, label =below: $m_3$]{$\circ$};
\node[] (3) [right = of 5, label = below: $m_4$]{$\circ$};
\node[] (4) [below = of 5, label = below: $m_5$]{$\circ$};
\node[] (6) [right = 2mm of 3] {};
\draw[->] (1) -- (5);
\draw[->] (2) -- (5);
\draw[->] (3) -- (5);
\draw[->] (4) -- (5);
\draw[dashed] (1) -- (3);
\draw[dashed] (3) to [out = -135, in = -45] (2);
\draw[dashed] (1) to [out = -50, in = 50] (4);
\draw[dashed] (3) -- (4);
\draw[dashed] (2) -- (4);
\draw[dashed] (2) -- (1);
\end{tikzpicture}
\end{align}
We now move on to skeletons with $0$ roots.  Those with one link or less are
\begin{align}
S_{5,10} \qquad \qquad&&  S_{5,11}\qquad\qquad  \nonumber\\
\begin{tikzpicture}
\node[] (1) [label=below: $m_1$]{$\circ$};
\node[] (2) [below left = 5mm and 5mm of 1, label =above: $m_2$]{$\circ$};
\node[] (3) [below = of 2, label = below: $m_3$]{$\circ$};
\node[] (5) [below right  = 5mm and 5mm of 1, label = right: $m_5$]{$\circ$};
\node[] (4) [below= of 5, label = below: $m_4$]{$\circ$};
\draw[->] (1) -- (2);
\draw[->] (2) -- (3);
\draw[->] (3) -- (4);
\draw[->] (4) -- (5);
\draw[->] (5) -- (1);
\end{tikzpicture} 
 &&
\begin{tikzpicture}
\node[] (1) [label=left: $m_1$]{$\circ$};
\node[] (2) [below = of 1, label =below: $m_2$]{$\circ$};
\node[] (3) [right = of 2, label = below: $m_3$]{$\circ$};
\node[] (4) [above = of 3, label = right: $m_4$]{$\circ$};
\node[] (5) [above right =5mm and 3mm  of 1, label = above: $m_5$]{$\circ$};
\draw[->] (1) -- (2);
\draw[->] (2) -- (3);
\draw[->] (3) -- (4);
\draw[->] (4) -- (1);
\draw[->] (5) -- (1);
\draw[dashed] (4) -- (5);
\end{tikzpicture} \nonumber\\
S_{5,12}\qquad\qquad &&  S_{5,13} \qquad\qquad\qquad\nonumber\\
\begin{tikzpicture}
\node[] (1) [label=below: $m_5$]{$\circ$};
\node[] (2) [right = 10mm of 1, label =below: $m_4$]{$\circ$};
\node[] (3) [right = 10mm of 2, label = below: $m_1$]{$\circ$};
\node[] (4) [below right = 5mm and 5mm  of 3, label = below: $m_2$]{$\circ$};
\node[] (5) [above right =5mm and 5mm  of 3, label = right: $m_3$]{$\circ$};
\draw[->] (1) -- (2);
\draw[->] (2) -- (3);
\draw[->] (3) -- (4);
\draw[->] (4) -- (5);
\draw[->] (5) -- (3);
\draw[dashed] (2) -- (5);
\end{tikzpicture} &&
\begin{tikzpicture}
\node[] (1) [label=below: $m_5$]{$\circ$};
\node[] (2) [right = of 1, label =below: $m_4$]{$\circ$};
\node[] (3) [right = of 2, label = below: $m_3$]{$\circ$};
\node[] (4) [right =  of 3, label = below: $m_2$]{$\circ$};
\node[] (5) [right = of 4, label = below: $m_1$]{$\circ$};
\draw[->] (1) -- (2);
\draw[->] (2) -- (3);
\draw[->] (3) -- (4);
\draw[->] (4) -- (5);
\draw[->] (5) to [out = 135, in = 45] (4);
\draw[dashed] (3) to [out = -45, in = -135] (5);
\end{tikzpicture} 
\end{align}
Next, the skeletons with two links are
\begin{align}
S_{5,14} &&
\begin{tikzpicture}[baseline=-2mm]
\node[] (1) [label=above: $m_4$]{$\circ$};
\node[] (2) [below = 5mm of 1, label =below: $m_5$]{$\circ$};
\node[] (3) [right = of 1, label = below: $m_3$]{$\circ$};
\node[] (4) [right =  of 3, label = below: $m_2$]{$\circ$};
\node[] (5) [right = of 4, label = below: $m_1$]{$\circ$};
\draw[->] (1) -- (3);
\draw[->] (2) -- (3);
\draw[->] (3) -- (4);
\draw[->] (4) -- (5);
\draw[->] (5) to [out = 135, in = 45] (4);
\draw[dashed] (3) to [out = -45, in = -135] (5);
\draw[dashed] (1) -- (2);
\end{tikzpicture} \nonumber\\
S_{5,15} &&
\begin{tikzpicture}[baseline=1mm]
\node[] (1) [label=above: $m_4$]{$\circ$};
\node[] (2) [right = 10mm of 1, label =below: $m_1$]{$\circ$};
\node[] (3) [below right = 5mm and 5mm of 2, label = below: $m_2$]{$\circ$};
\node[] (4) [above right = 5mm and 5mm of 2, label = above: $m_3$]{$\circ$};
\node[] (5) [right = 2.5cm of 2, label = below: $m_5$]{$\circ$};
\draw[->] (1) -- (2);
\draw[->] (2) -- (3);
\draw[->] (3) -- (4);
\draw[->] (4) -- (2);
\draw[->] (5) -- (4);
\draw[dashed] (3) --(5);
\draw[dashed] (1) -- (4);
\end{tikzpicture}\nonumber\\
 S_{5,16} &&
\begin{tikzpicture}[baseline=1mm]
\node[] (1) [label=below: $m_5$]{$\circ$};
\node[] (2) [right= of 1, label =below: $m_4$]{$\circ$};
\node[] (3) [right = of 2, label = below: $m_1$]{$\circ$};
\node[] (4) [right = of 3, label = below: $m_2$]{$\circ$};
\node[] (5) [right = of 4, label = below: $m_3$]{$\circ$};
\draw[->] (1) -- (2);
\draw[->] (2) -- (3);
\draw[->] (3) -- (4);
\draw[->] (4) to [out = 135, in = 45] (3);
\draw[->] (5) -- (4);
\draw[dashed] (2) to [out = -45, in = -145] (4);
\draw[dashed] (3) to [out = 90, in = 90] (5);
\end{tikzpicture}
\end{align}
The skeletons with three links are
\begin{align}
S_{5,17} \qquad\qquad&& S_{5,18}\qquad\qquad \nonumber\\
\begin{tikzpicture}
\node[] (1) [label=below: $m_5$]{$\circ$};
\node[] (2) [above right =3mm and 5mm of 1, label =above: $m_4$]{$\circ$};
\node[] (3) [below right = 3mm and 5mm of 1, label = below: $m_3$]{$\circ$};
\node[] (4) [right = 2 cm of 1, label = below: $m_2$]{$\circ$};
\node[] (5) [right = of 4, label = below: $m_1$]{$\circ$};
\draw[->] (1) -- (2);
\draw[->] (2) -- (4);
\draw[->] (3) -- (4);
\draw[->] (4) -- (5);
\draw[->] (5) to [out = 135, in = 45] (4);
\draw[dashed] (2) --(3);
\draw[dashed] (3) to [out = -45, in = -135] (5);
\draw[dashed] (2) to [out = 45, in = 90] (5);
\end{tikzpicture}
&&
\begin{tikzpicture}
\node[] (1) [label=above: $m_4$]{$\circ$};
\node[] (2) [below =7mm of 1, label =left: $m_1$]{$\circ$};
\node[] (3) [below left = 5mm and 5mm of 2, label = below: $m_2$]{$\circ$};
\node[] (4) [below right = 5 mm and 5mm of 2, label = below: $m_3$]{$\circ$};
\node[] (5) [right = 5mm of 2, label = above: $m_5$]{$\circ$};
\draw[->] (1) -- (2);
\draw[->] (2) -- (3);
\draw[->] (3) -- (4);
\draw[->] (4) -- (2);
\draw[->] (5) -- (2);
\draw[dashed] (1) --(5);
\draw[dashed] (4) -- (5);
\draw[dashed] (1) to [out = 0, in = 0] (4);
\end{tikzpicture}
\end{align}
Finally,  we have the skeletons with $4$ and $6$ links:
\begin{align}
S_{5,19}\qquad\qquad && S_{5,20}\qquad\qquad \nonumber\\
\begin{tikzpicture}[baseline=-28mm]
\node[] (1) [label=above: $m_5$]{$\circ$};
\node[] (2) [below = of 1, label =left: $m_4$]{$\circ$};
\node[] (3) [ right =  of 2, label = below: $m_1$]{$\circ$};
\node[] (4) [right = of 3, label = below: $m_2$]{$\circ$};
\node[] (5) [right = of 4, label = below: $m_3$]{$\circ$};
\draw[->] (1) -- (3);
\draw[->] (2) -- (3);
\draw[->] (3) -- (4);
\draw[->] (4) to [out = 135, in = 45] (3);
\draw[->] (5) -- (4);
\draw[dashed] (1) --(2);
\draw[dashed] (4) to [out = 90, in = 0] (1);
\draw[dashed] (2) to [out = -45, in = -135] (4);
\draw[dashed] (5) to [out = -135, in = -45] (3);
\end{tikzpicture} &&
\begin{tikzpicture}
\node[] (1) [label=above: $m_1$]{$\circ$};
\node[] (5) [below = of 1, label = below left: $m_2$]{$\circ$};
\node[] (2) [left = of 5, label =left: $m_5$]{$\circ$};
\node[] (4) [ right =  of 5, label = below: $m_3$]{$\circ$};
\node[] (3) [below = of 5, label = below: $m_4$]{$\circ$};
\draw[->] (1) -- (5);
\draw[->] (2) -- (5);
\draw[->] (3) -- (5);
\draw[->] (4) -- (5);
\draw[->] (5) to [out = 45, in = -45] (1);
\draw[dashed] (1) --(2);
\draw[dashed] (4) -- (1);
\draw[dashed] (3) -- (4);
\draw[dashed] (2)--  (3);
\draw[dashed] (1) to [out = 0 , in = 0] (3);
\draw[dashed] (2) to [out = 45 , in = 135] (4);
\end{tikzpicture}
\end{align}

\subsection{A priori bounds on exponents}
To produce a LG theory each of the skeletons above is assigned a set of exponents, i.e. the integers $m_i$ for each root or pointer variable.  Since we are not interested in variables with $q_i = 1/2$, $m_i \ge 3$ when $X_i$ is a root, and $m_i \ge 2$ when $X_i$ is a pointer.  For $n < 3$ these are the only constraints on the $m_i$.  Hence, each superpotential in~(\ref{eq:n2LG}) yields a gc family with $c<6$.

For $n \ge 3$ there are non-trivial bounds on $\qtot$ given in~(\ref{eq:qtotbounds}), and these translate into bounds on the exponents $m_i$.  In order to avoid drowning in a case-by-case analysis, we would like to start with some a priori bounds on the $m_i$ that do not depend on the details of the skeleton.  Such a bound follows from theorem~\ref{thm:chargebound}.  The details are relegated to the Appendix, and here we just state the results.  

Let  $\bM = (m_1,m_2,\ldots,m_n)$ be a set of exponents ordered according to 
$$m_1\ge m_2 \ge \cdots \ge m_n~.$$ 
A necessary condition for~(\ref{eq:qtotbounds}) is that
\begin{align}
\bM \in \Sigma_n = \Sigma_{n}^{\text{spor}} \cup \Sigma_n^{\infty}~,
\end{align}
where $\Sigma_n^{\infty}$ is composed is composed of infinite families of ordered $n$-tuples of integers, while $\Sigma_n^{\text{spor}}$ is a finite list of ordered $n$-tuples of integers.  To list these we will use a short-hand: e.g. $(6..7,6,5)$ will stand for $(6,6,5),(7,6,5)$~.  For $n=3$ we find
\begin{align}
\Sigma_3^{\text{spor}} & = 
\{
(6..7,6,5)~,~~~~~
(5..9,5,5)~,~~~~~
(7..9,7,4)~,~
(6..11,6,4)~,
(5..19,5,4)~,\nonumber\\
&\qquad (11..13,11,3)~, 
(10..14,10,3)~,
(9..17,9,3)~,
(8..23,8,3)~,
(7..41,7,3)
\}~, \nonumber\\
\Sigma_3^{\infty} & = 
\{
(4..\infty,4,4)~,
(3..\infty,3,3)~,
(4..\infty,4,3)~,
(5..\infty,5,3)~,
(6..\infty,6,3)\} \nonumber\\
&\qquad
\cup \bigcup_{k\ge 2}
\{(k..\infty,k,2)
\}
\end{align}
The $n=4,5$ the results are
\begin{align}
\Sigma_4^{\text{spor}} & = \{(5,4,4,3)~,(4,4,4,3)~,(5..7, 5,3,3)~,(4..11,4,3,3)\} \nonumber\\
& \qquad \cup \bigcup_{\bM \in \Sigma_3^{\text{spor}}} \{(\bM,2) \}~,\nonumber\\
\Sigma_4^{\infty} & = \{ (3..\infty,3,3,3)\}\cup \bigcup_{\bM \in \Sigma_3^{\infty}} \{(m_1,m_2,m_3,2) \}~, \nonumber\\
\Sigma_5^{\text{spor}} & = \{(3..5,3,3,3,3)\}\cup \bigcup_{\bM \in \Sigma_4^{\text{spor}}} \{(\bM,2) \}~, \nonumber\\
\Sigma_5^{\infty} & = \bigcup_{\bM \in \Sigma_4^{\infty}} \{(\bM,2)\}~.
\end{align}
We are finally ready to tackle our graphs.  In the remainder of this section we will describe the feasible irreducible graphs.  Since $n\le 2$ is trivial, we begin with the $n=3$ skeletons.

\subsection{Exponents for $n=3$ graphs}

\subsubsection*{\underline{$S_{3,1}$}}
Here there are no links, and we just need to describe the bounds on the exponents that lead to $c< 6$, i.e. $\qtot >1/2$.  Our first step is to use~(\ref{cor:n3}) to limit the possible infinite families of exponents.
Running through the finite list of cases in $\Sigma^\infty_3$, we find the following infinite families:
\begin{align}
\label{eq:infinitefamiliesS31}
(2,\ast,\ast)~, 		&& (\ast,2,\ast)~, && \nonumber\\
(3,3,\ast)~,		&& (3,\ast,3)~,	&& (\ast,3,3)~, \nonumber\\
(3,4,\ast)~,		&& (3,\ast,4)~,	&& (\ast,3,4)~, \nonumber\\
(4,3,\ast)~,		&& (4,\ast,3)~,	&& (\ast,4,3)~, \nonumber\\
(3,\ast,5)~,		&& (3,\ast,6)~,	&& (4,\ast,4)~, \nonumber\\
(5,\ast,3)~,		&& (6,\ast,3)~.
\end{align}
Eliminating these from $\Sigma_{3}$ leads to a finite list of exponents: we find $220$ possibilities with a maximum exponent value of $25$.\footnote{To spare the reader we will not list these $220$ possibilities explicitly --- they are easy to recover given the bound on the maximum exponent.  When quoting the list of sporadic possibilities is similarly impractical, we will follow this procedure in the cases below.  We will also provide an online list of the charges for all irreducible theories that are not contained in one of the explicitly listed infinite families.}

Not every set of exponents leads to an irreducible LG family.  For instance, the family with exponents $\bM = (m_1,4,3)$ is deformation--equivalent to
\begin{align}
W = X_3^3 + X_2^6 + X_1^{m_1} X_2~.
\end{align}
There are many similar decompositions, but there are also irreducible families; as an example, consider $\bM = (m_1,3,3)$, with $m_1$ not divisible by $7$.  The most general superpotential is
\begin{align}
W = a_1 X_1^{m_1} X_2 + a_2 X_2^3 X_3 + a_3 X_3^3~,
\end{align}
and this is irreducible.  

\subsubsection*{\underline{$S_{3,2}$}}
This diagram is symmetric in the $1,3$ indices, so we will assume $m_3 \ge m_1$.  In fact, we can assume $m_3 >m_1$ because if $m_3=m_1$ and the link is feasible, then both $X_3$ and $X_1$ can be roots, i.e. the graph will be reducible.  Since $X_2$ is a root we also have $m_2 >2$.  With these restrictions 
 we find from~(\ref{cor:n3}) that the infinite families satisfying $\qtot >1/2$ are
\begin{align}
(2,\ast,\ast) &&
(3,3,\ast)	 &&
(3,4,\ast)	&&
(4,3,\ast)   &&
(3,\ast,4) &&
(3,\ast,5) &&
(3,\ast,6)~.
\end{align}
The last three cases are not feasible if $m_2$ is larger than $16$.  To see this, consider the exponent $\bM=(3,m_2,5)$.  The charges in this case are $(\ff{m_2-1}{3 m_2}, \ff{1}{m_2}, \ff{m_2-1}{5m_2})$.  Thus, to have the link between $1$ and $3$ we need a solution to
\begin{align}
5p_1 + 3p_3 = \frac{15 m_2}{m_2-1} = 15 + \frac{15}{m_2-1}~.
\end{align}
So, $m_2 \le 16$.  Proceeding in the same fashion, we find stronger bounds on $m_2$ for the exponents $\bM = (3,\ast,4)$ and $\bM = (3,\ast ,6)$.

Next, we consider the cases $(4,3,\ast)$, $(3,4,\ast)$, and $(3,3,\ast)$.
\begin{enumerate}
\item $\bM = (4,3,m_3)$.  In this case the charges are $(\frac{1}{6},\frac{1}{3},\frac{2}{3m_3})$.  Since $X_1$ and $X_2$ can both be roots, this is feasible; on the other hand, it is manifestly a reducible LG family, so we need not consider it further.
\item $\bM = (3,4,m_3)$.  The charges are $(\frac{1}{4},\frac{1}{4},\frac{3}{4m_3})$, and this is again a reducible LG family.
\item $\bM = (3,3,m_3)$.  The charges are $(\frac{2}{9},\frac{1}{3},\frac{2}{3m_3})$.  The link requires
\begin{align}
2 m_3 p_1 + 6 p_3 = 9 m_3~.
\end{align}
This is possible only if $m_3 = 2 k$, in which case $q_3 = \frac{1}{3k}$.  Thus, the link is feasible if and only if the LG family is reducible.
\end{enumerate}
The final case, $\bM = (2,m_2,m_3)$ takes a little work following Diophantus.  The feasibility condition is
\begin{align}
(m_2-1) \left[ m_3 p_1 + 2 p_3\right] = 2m_2 m_3~.
\end{align}
Since $\gcd(m_2-1,m_2) = 1$ solutions are possible only if there exists an integer $\ell$ such that
\begin{align}
(m_2 - 1) \ell &= 2 m_3~,&
m_3 p_1 + 2p_3 & = m_2 \ell~.
\end{align}
If $m_2$ is even, then the first equation implies $\ell = 2 r$, but  then $q_3 = \frac{1}{2kr}$.  So, this possibility leads to a reducible family.  

If, on the other hand, $m_2 = 2k +1$, then $m_3 = k\ell$, and the second equation becomes
\begin{align}
2 p_3 = \ell (1 + k (2-p_1) )~.
\end{align}
If $p_1 =0$ or $p_3 = 0$, then the family will be reducible.  So, we need only consider $p_1 = 1$ or $p_1 = 2$.  If $p_1 = 2$, then $\ell = 2p_3$, and thus $m_3 = 2k p_3$.  This implies $q_3 = \frac{1}{p_3(2k+1)}$, so we again find a reducible family.  The last interesting possibility is $p_1 = 1$.  If $\ell$ is even, we again find a reducible family, but if $\ell = (2s+1)$, we find an interesting possibility.  In that case a solution for $p_3$ is possible if and only if $k = 2t+1$, so that we find a family
\begin{align}
\bM = (2, 4t+3, (2s+1)(2t+1) )~.
\end{align}
This is the only infinite family of exponents that yields irreducible feasible graphs for $S_{3,2}$.  Eliminating this infinite family from $\Sigma_3$, we find just three remaining irreducible exponents:
\begin{align}
(3, 7, 4)~, && (3, 13, 4)~, && (3, 16, 5)~.
\end{align}

\subsubsection*{\underline{$S_{3,3}$}}
The graph has cyclic symmetry.  Using~(\ref{cor:n3}) we find that up to cyclic symmetry the infinite families are
\begin{align}
(2,\ast,\ast)~, &&
(3,3,\ast)~, 	&&
(3,4,\ast)~,	&&
 (4,3,\ast)~.
\end{align}
In addition to these $\Sigma_3$ contains, up to cyclic permutations, $48$ sets of exponents with maximum exponent of $18$.

\subsubsection*{\underline{$S_{3,4}$}}
By similar manipulations, we find that all irreducible exponents are contained in three infinite families
\begin{align}
\bM &= ( 2t(2s+1)+1,4t+1,2)~, &
\bM & = (4s+2,3,4)~,&
\bM &= (3s-1,4,3)~, 
\end{align}
or the short list
\begin{align}
\bM \in \{ (3,6,5)~,~ (5,7,3)~,~(3, 8, 7)~,~(4, 6, 5)~,~(5, 5, 4)~,~ (7, 7, 3)~,~ (7, 10, 3)~,~ (8, 8, 3)\}~.
\end{align}
Note that the finite list only contains three distinct charges:
\begin{align}
q &= (\ff{2}{23},\ff{3}{23},\ff{7}{23})~,&
q & = (\ff{3}{23},\ff{4}{23},\ff{5}{23})~,&
q & = (\ff{2}{17},\ff{3}{17},\ff{5}{17})~.
\end{align}
This completes the analysis of exponents for $n=3$ graphs without pointing links.

\subsection{Exponents for the $n=4$ graphs}
We now turn to $n=4$.  Although the number of skeletons is large, we can reuse the $n=3$ results because most of the $n=4$ graphs arise as \textit{decorations} of $n=3$ graphs.  In other words, they have one of two forms:
\begin{align}
\begin{tikzpicture}
\draw (0,0) node[label=below: $m_4$] {$\circ$};
\draw (3,0) circle[radius=0.7];
\draw (3,0) node {$G_3$};
\draw[->] (0.5,0)--(1.8,0);
\end{tikzpicture}
&&
\begin{tikzpicture}
\draw (0,0) node[label=below: $m_4$] {$\circ$};
\draw (3,0) circle[radius=0.7];
\draw (3,0) node {$G_3$};
\draw[->] (0.5,0)--(1.8,0);
\draw[dashed] (0.2,0.2) to[out = 30, in = 170] (2.4,0.55);
\end{tikzpicture}
\end{align}
Thus, we have\footnote{Recall that $\pi (i)$ denotes the node pointed to by node $i$.}
\begin{align}
q_4 & = \frac{1}{m_4} \left(1-q_{\pi(4)}\right)~, &
\qtot(G_4) & = \qtot(G_3) + q_4~.
\end{align}
Since the pointing exponent $m_4 \ge 2$, we conclude that
\begin{align}
\qtot(G_4) -1 \le \qtot(G_3) -\frac{1}{2}~.
\end{align}
If $G_4$ is a decoration of $G_3$, then to find the exponents on $G_4$ consistent with~(\ref{eq:qtotbounds}), i.e. those consistent with $c<6$ we can restrict attention to $G_3$ that satisfy the bound for $n=3$.  Moreover, whenever $G_3$ is reducible, then $G_4$ is necessarily reducible as well:  this means that we need only worry about decorating the irreducible $n=3$ graphs.  Finally, suppose that 
\begin{align}
\max_{\text{irred}~~G_3}  \qtot(G_3) = q_\ast < 1~.
\end{align}
Then for every irreducible $G_4$ that is a decoration of $G_3$ the $m_4$ exponent is bounded.  We will now apply these ideas to the $n=4$ graphs.

\subsubsection*{\underline{$S_{4,1}$}}
$S_{4,1}$ is a decoration of $S_{3,1}$ with $X_4$ pointing at $X_1$, and therefore
\begin{align}
q_4 = \frac{1}{m_4} \left(1-q_1\right)~.
\end{align}
Since $\qtot$ is monotonically decreasing in each exponent, it is easy to see that $\qtot(S_{3,1}) \le 1$
for all exponents, with equality precisely for $\bM = (2,2,3)$; however, this has $q_1 = q_2 = q_3 = 1/3$ and obviously decomposes.  Hence, for irreducible $S_{3,1}$ we have a better bound
\begin{align}
\qtot(S_{3,1} ) \le \frac{17}{18}~,
\end{align}
which is realized by $\bM = (2,3,3)$.  Thus, for irreducible $S_{4,1}$ we have the bound
\begin{align}
\qtot(S_{4,1})-1 \le \qtot(S_{3,1} ; (2,3,3)) + q_4-1 = \frac{1}{18}\left(-1 + \frac{11}{m_4}\right)~.
\end{align}
This means that $2\le m_4 \le 10$.  We now apply this to the $S_{3,1}$ exponents in~(\ref{eq:infinitefamiliesS31}) and find
the following infinite families of exponents that contain irreducible theories:
\begin{align}
(\ast,2,\ast,2)~, 		&&(2,3,\ast,2)~,		&&(2,\ast,3,2)~,		&&(2,\ast,4,2)~, \nonumber\\
(2,2,\ast,3)~,		&&(2,\ast,3,3)~,		&&(\ast,2,3,3)~,		&&(3,\ast,3,2)~,\nonumber\\
(\ast,3,3,2)~,		&&(\ast,3,4,2)~,		&&(\ast,4,3,2)~.
\end{align}
Once we eliminate these from $\Sigma_4$, we find there are an additional $108$ irreducible exponents, with maximum exponent of $13$.

\subsubsection*{\underline{$S_{4,2}$}}
$S_{4,2}$ is a decoration of $S_{3,2}$, with $X_4$ pointing at $X_3$.  Thus, there are just two possible exponents for an infinite family of irreducible theories:
\begin{align}
\bM &= (2,4t+3,(2s+1)(2t+1),m_4)~,&
\bM &= ((2s+1)(2t+1),4t+3,2,m_4)~.
\end{align}
We also have $\qtot(S_{3,2}) \le 6/7$ for irreducible theories, with equality realized by $\bM = (2,7,3)$ or $\bM = (3,7,2)$.  This immediately leads to a bound $m_4 \le 4$.  Taking a look at $\qtot(S_{4,2})$, we then obtain just one infinite family with irreducible theories:
\begin{align}
\bM = (2,4t+3,(2s+1)(2t+1),2)~.
\end{align}
Eliminating this from $\Sigma_4$, we find that there are  $5$ additional irreducible theories with exponents:
\begin{align}
(2, 7, 3, 3)~,&& (2, 7, 3, 4)~,&& (3, 7, 2, 2)~, && (3, 7, 2, 3)~, && (3, 7, 4, 2)~.
\end{align}

\subsubsection*{\underline{$S_{4,3}$}}
$S_{4,3}$ decorates $S_{3,1}$ with $X_4$ pointing at $X_2$ and having a link to $X_1$.  Using similar arguments to those for $S_{4,1}$, we find that $m_4 \le 13$ for irreducible theories.  We then check all the infinite families in $S_{3,1}$ with $m_4$ satisfying the bound, and find a number of subsets of unbounded exponents.  These include, for instance, $\bM = (3,m_2,3,2)$ and $\bM=(2,m_2,m_3,2)$.  However, when we consider the necessary link between $X_4$ and $X_1$, we find that there are, in fact, no unbounded irreducible families.  For instance, in the case of $\bM =(2,m_2,m_3,2)$ $q_1 = q_4$, so that if a link is possible, then $X_1$ and $X_4$ can both be roots.  On the other hand, for $\bM = (3,m_2,3,2)$, we find that the link requires a solution to 
\begin{align}
2 p_1 + 3 p_4 = \frac{18 m_2}{3 m_2 -2} = 6 + \frac{12}{3m_2-2}~.
\end{align}
Since the left-hand-side must be an integer, $m_2$ is bounded.\footnote{In fact, here there are no $m_2\ge2$ for which the right-hand-side is an integer, but the point is that even if there were non-trivial solutions, they would be bounded.}  After checking through every infinite family of exponents in $\Sigma_3^{\infty}$, we find that every exponent yields a graph that is reducible whenever it is feasible.  We then also check that the same holds for every $\bM \in\Sigma_3^{\text{spor}}$.  Thus, $S_{4,3}$ contributes no new irreducible theories.  This has a nice consequence for the $n=5$ analysis:  every $n=5$ graph that is a decoration of $S_{4,3}$ can now be ignored.

\subsubsection*{\underline{$S_{4,4}$, $S_{4,6}$, $S_{4,8}$,  $S_{4,9}$}}
All of these graphs behave as $S_{4,3}$:  there are no irreducible theories with $c<6$.

\subsubsection*{\underline{$S_{4,5}$}}
This graph is not a decoration of an $n=3$ graph.  We use~(\ref{cor:n4}) to restrict the possible unbounded exponents, and we find that up to cyclic symmetry there are just three unbounded families:
\begin{align}
\bM &= (2,\ast,2,\ast)~,&
\bM &=(2,2,3,\ast)~,&
\bM &=(3,2,2,\ast)~.
\end{align}
$\Sigma_3$ contains $22$ more sets of exponents with a maximum exponent of $8$.

\subsubsection*{\underline{$S_{4,7}$}}
This is a decoration of $S_{3,3}$, and there is an infinite family of irreducible theories with exponents
\begin{align}
\bM = (2,2,3k+1,3)~.
\end{align}
Removing this family from $\Sigma_4$, we find there is just one more irreducible exponent:
\begin{align}
\bM = (3,2,2,5) ~.
\end{align}

\subsection{Exponents for the $n=5$ graphs}

There is nothing conceptually different about the analysis of the $n=5$ cases, but it does become a bit more tedious because there are more possibilities for the potentially infinite families of exponents.  To handle these possibilities, we made a scan through all exponents with $\max \bM \le 100$ for irreducible feasible graphs with $\qtot >3/2$.  We found no evidence of  infinite families and  in fact just a short list of possible exponents with $\max \bM = 10$.  We believe we have the full set of exponents, and we will discuss this further below.  For now we will simply summarize the results of our scan through the exponents.

Eleven of the twenty $n=5$ graphs are decorations of $n=4$ graphs that are reducible for every exponent.  For instance, $S_{5,5}$ is a decoration of $S_{4,3}$, with $X_5$ pointing at $X_1$.  Following the general discussion of decorations, we need not consider such $n=5$ graphs.  This leaves us to consider nine graphs: $S_{5,\alpha}$ with $\alpha \in \{1,2,3,4,10,11,12,15,18\}$.  Of these only $S_{5,1}$, $S_{5,2}$ and $S_{5,10}$ admit irreducible theories, and, furthermore, the single irreducible exponent for $S_{5,2}$ produces a set of charges already contained in the exponents of $S_{5,1}$.  The full list of exponents is then as follows:  $S_{5,1}$ has $19$ exponents
\begin{align*}
(2, 2, 4, 2, 2)~,&& (2, 2, 4, 3, 2)~,&& (2, 2, 4, 4, 2)~,&& (2, 2, 4, 5, 2)~,&& (2, 2, 5, 2, 2)~, \nonumber\\
(2, 2, 5, 3, 2)~,&& (2, 2, 6, 2, 2)~,&& (2, 2, 7, 2, 2)~,&& (2, 2, 8, 2, 2)~,&& (2, 2, 9, 2, 2)~, \nonumber\\
(2, 2, 10, 2, 2)~,&& (2, 3, 3, 2, 2)~,&& (2, 3, 3, 3, 2)~,&& (2, 3, 3, 4, 2)~,&& (2, 3, 3, 5, 2)~,\nonumber\\
(2, 5, 3, 2, 2)~,&& (2, 7, 3, 2, 2)~,&& (2, 9, 3, 2, 2)~,&& (3, 2, 4, 2, 2)~,
\end{align*}
and $S_{5,10}$ has $7$ exponents (up to cyclic symmetry):
\begin{align*}
(2, 2, 2, 2, 3)~, &&(2, 2, 2, 2, 4)~, &&(2, 2, 2, 2, 5)~, &&(2, 2, 2, 2, 6)~, \nonumber\\
 (2, 2, 2, 2, 7)~, &&(2, 2, 2, 3, 3)~, &&(2, 2, 3, 2, 3)~.
\end{align*}

\section{Pointing links} \label{s:pointing}
We now turn to the final aspect of the classification:  the analysis of pointing links.  Again, in principle the problem is straightforward:  for every skeleton in $5$ variables or less, we are to consider every possible way of making every link point at some node.  The only requirement modulo feasibility of the link is that a link between variables $X_i,X_j$ cannot point at $X_i$, $X_j$ or any $X_k$ that is a target of a pointer.  The last point may be a little bit surprising, so we illustrate it with a simple example based on $S_{5,2}$.  Suppose the link between $X_1$ and $X_3$ points at the variable $X_4$, so that a representative superpotential is
\begin{align}
W = (X_1^2+X_3^2)X_2 + X_2^3 + X_4^2X_1 + X_5^2X_4 + X_1X_3X_4~.
\end{align}
In this case $dW = 0$ if $X_2=X_4=0$ and
\begin{align}
X_1^2+X_3^2 &= 0~,&
X_1X_3+X_5^2 & = 0~,
\end{align}
and this has solutions with $X_1,X_3,X_5$ non-zero.

\subsection{Pointing links for disconnected skeletons}
The most interesting feature of a pointing link is that it can connect graphs with disconnected skeletons;  we already saw examples in section~\ref{ss:pointedfeasibility}.

Let us investigate this more systematically by examining a graph with disconnected skeleton $S_{3,2}+S_{1,1}$. The only way this can be irreducible is if the link of $S_{3,2}$ points at the node of $S_{1,1}$:
\begin{align}
\begin{tikzpicture}
\node [] (1)   [label=below:$m_1$] {$\circ$};
\node [] (2)  [right=of  1, label=below:$m_2$] {$\circ$};
\node [] (3) [right=of 2, label = below: $m_3$] {$\circ$};
\node [] (L1) [above=2mm of 2] {$\bullet$};
\node [] (4) [above= 5mm of L1, label = above :$m_4$] {$\circ$};
\draw [->]  (3) -- (2);
\draw [->] (1) -- (2);
\draw[dashed] (L1) to [out = 0, in =90] (3);
\draw[dashed] (L1) to [out = 180, in = 90] (1);
\draw[dashed] (L1) to [out = 90, in = -90] (4);
\end{tikzpicture}
\end{align}
We examine the exponents in $\Sigma^{\infty}_4$ and find a large number of infinite families consistent with $\qtot >1$.  However, in all but one case, a solution to the pointing link only has a finite number of solutions.  For instance, we have the family $\bM = (2,\ast,3,5)$, for which the link requires
\begin{align}
3 p_1 + 2p_2 = \frac{24 k}{5k-5}~,
\end{align}
and this has a unique solution for $k=25$.  For all but one other case, the feasibility condition always takes this form of
\begin{align}
a p_1 + b p_2 = \frac{ c k}{d(k-1)}~,
\end{align}
for $a,b,c,d$ some positive integers:  in every one of these situations, the possible values of $k$ are therefore bounded.

The exception is the exponent $\bM = (2,k,2,l)$, for which the charges are
\begin{align}
q_1 &= q_3 = \frac{k-1}{2k}~,& 
q_2 & = \frac{1}{k}~,&
q_4 & = \frac{1}{l}~.
\end{align}
This is a reducible family:  since $q_1 = q_3$, if a link is feasible, then $X_1$ can point at $X_4$, so that we simply obtain the disconnected skeleton $S_{2,1}+S_{2,1}$.  It follows that only a finite number of exponents yields feasible and irreducible theories.

\subsection{General analysis via numeric study}
While the analysis of the general case is in principle straight-forward, the number of cases to check grows rather large as we increase $n$ and the number of pointing links.  A link typically has more than one possible target, and we must consider all possibilities.  It is easy to scan through $\Sigma_n^{\text{spor}}$ for all of these possibilities, but the existence of the potentially infinite families in $\Sigma_n^{\infty}$ presents a challenge.   In every graph that we checked manually, we were able to rule out irreducible theories for all but a finite subset of $\Sigma_n^{\infty}$.  As we illustrated in the examples above, this sort of result typically relies on three factors:  the bound on $\qtot$, the necessary conditions for irreducibility, and the feasibility of links.  Typically these factors are intertwined, and we have not been able to distill the constraints into a rigorous finiteness result.

Instead, we made a numeric investigation of all pointing links for exponents bounded by $\max \bM \le 100$.\footnote{It turns out that we never need to consider links between links: even without the constraints from links between links, there is no feasible graph where two pointing links have the same target.}   We found no evidence of infinite families of exponents that contain irreducible theories and in fact a rather small list of distinct charges.  The connected $n=4$ graphs with at least one pointing link yielded $20$ distinct charges from the following skeletons:  $S_{3,4}+S_{1,1}$, $S_{3,2}+S_{1,1}$, $S_{4,2}$, and $S_{4,4}$.  These realized a maximum exponent of $31$, while the $n=5$ graphs with at least one pointing link produced $2$ distinct charges from the skeleton $S_{4,2}+S_{1,1}$.

These results, together with investigations of infinite families for specific graphs, amount to substantial evidence that we do indeed have a complete list of irreducible theories, but it would certainly be useful to have a rigorous proof of the assertion --- such a proof would likely yield insights into LG combinatorics.  We leave this for future work.

\section{Summary of Results}
We studied the combinatorics of LG theories.  Our work of course has a close relation to the classification of quasi--homogeneous singularities that has been of substantial interest to mathematicians, and to our surprise we found some singularities that were overlooked in previous work.  The physics perspective offers some useful simplifications, most of which were indeed anticipated in~\cite{Vafa:1988uu}:  it is natural to organize the theories by $n$, and to restrict attention to indecomposable theories; moreover, instead of organizing theories at fixed Milnor number (in the SCFT this is the number of relevant and marginal deformations of the theory), it is natural to focus on the central charge.  We applied this philosophy to the problem of finding all indecomposable theories with $c<6$.

Let us now summarize our results.  We will describe the set of irreducible theories for various $1\le n \le 5$.  We remind the reader that the results for $n=4,5$ do rely on some numeric studies.  Two more points should be borne in mind:  first, while we claim to have every indecomposable theory, our lists do contain some decomposable theories; second, because of the infinite families present some indecomposable theories may occur more than once.  We claim, however, that every indecomposable theory is included.

In what follows we present the infinite families of charges.  The sporadic cases are provided in the supplementary files included with the ArXiv submission.

\subsection{rigorous results for $n\le 3$}

\subsubsection*{\underline{ $n=1$}} 
We have the familiar list of $c<3$ $A_{k-2}$ minimal models with charge $q=1/k$.
\subsubsection*{\underline{$n=2$}}
There are two sets of charges:
\begin{align}
 \left(\frac{l -1}{kl}~,  \frac{1}{l} \right)~,&&
 \left(\frac{l -1}{kl-1}~,\frac{k-1}{kl-1}\right)~.
\end{align}
\subsubsection*{\underline{$n=3$}}
$S_{3,1}$ yields the infinite families
\begin{align}
\left(\frac{7}{9k}, \frac{2}{9}, \frac{1}{3} \right)~, &&
\left(\frac{2k+1}{9k}, \frac{k-1}{3k}, \frac{1}{k}\right)~,&&
\left(\frac{2k+1}{12k}, \frac{k-1}{3k}, \frac{1}{k}\right)~,&&
\left(\frac{3k-2}{9k}, \frac{2}{3k}, \frac{1}{3}\right) \nonumber\\
\left(\frac{3k-2}{12k}, \frac{2}{3k}, \frac{1}{3}\right)~,&&
\left(\frac{3k-2}{15k}, \frac{2}{3k}, \frac{1}{3}\right)~,&&
\left(\frac{3k-2}{18k}, \frac{2}{3k}, \frac{1}{3}\right)~,&&
\left(\frac{3k+1}{12k}, \frac{k-1}{4k}, \frac{1}{k}\right)~,\nonumber\\
\left(\frac{4k-3}{12k}, \frac{3}{4k}, \frac{1}{4}\right)~,&&
\left(\frac{4k-3}{16k}, \frac{3}{4k}, \frac{1}{4}\right)~,&&
\left(\frac{5k-4}{15k}, \frac{4}{5k}, \frac{1}{5}\right)~,&&
\left(\frac{6k-5}{18k}, \frac{5}{6k}, \frac{1}{6}\right)~,&&\nonumber\\
\left( \frac{l+1}{2lk}, \frac{l-1}{2l}, \frac{1}{l}\right)~,&&
\left( \frac{kl-l+1}{2lk}, \frac{l-1}{lk}, \frac{1}{l}\right)~.
\end{align}
$S_{3,2}$ yields the infinite family
\begin{align}
\left(\frac{2k+1}{4k+3}, \frac{1}{4k+3}, \frac{2}{(4k+3)(2l+1)}\right)~.
\end{align}
$S_{3,3}$ yields the infinite families
\begin{align}
\left(\frac{2k+1}{9k+1},\frac{3k-2}{9k+1},\frac{7}{9k+1}\right)~,&&
\left(\frac{2k+1}{12k+1},\frac{4k-3}{12k+1},\frac{10}{12k+1}\right)~,\nonumber\\
\left(\frac{3k+1}{12k+1},\frac{3k-2}{12k+1},\frac{9}{12k+1}\right)~,&&
\left(\frac{kl-l+1}{2kl+1},\frac{2l-1}{2kl+1},\frac{k+1}{2kl+1}\right)~.
\end{align}
To describe the $S_{3,4}$ families, set $P(k,l) = 1+8k(2kl+k+l+1)$.  Then the charges are
\begin{align}
\left(\frac{4k+1}{P(k,l)},\frac{2k(2l+1)}{P(k,l)},\frac{(4k+1)(4kl+2k+1)}{2P(k,l)}\right)~,&&
\left(\frac{3}{12k-5},\frac{3k-2}{12k-5},\frac{3k-1}{12k-5}\right)~,\nonumber\\
\left(\frac{2}{12k+5},\frac{4k+1}{12k+5},\frac{2k+1}{12k+5}\right)~.
\end{align}
These are the complete infinite families for $n\le 3$.

\subsection{Infinite families for $n =4,5$}
Our numerics suggest that there are no infinite families of irreducible theories for any graphs with pointing links or $n=5$, but as we emphasized, we have not a rigorous proof of this.  However, for $n=4$ and no pointing links the results are rigorous.

$S_{4,1}$ yields the infinite families
\begin{align}
\left(\frac{7}{9k}, \frac{2}{9}, \frac{1}{3}, \frac{9k-7}{18k}\right)~,&&
\left(\frac{k+1}{4k}, \frac{k-1}{2k}, \frac{1}{k}, \frac{3k-1}{12k}\right)~,&&
\left(\frac{2k+1}{6k}, \frac{k-1}{3k}, \frac{1}{k}, \frac{4k-1}{12k}\right)~,\nonumber\\
\left(\frac{3k-2}{6k}, \frac{2}{3k}, \frac{1}{3}, \frac{3k+2}{12k}\right)~,&&
\left(\frac{3k-2}{6k}, \frac{2}{3k}, \frac{1}{3}, \frac{3k+2}{18k}\right)~,&&
\left(\frac{3k-2}{9k}, \frac{2}{3k}, \frac{1}{3}, \frac{3k+1}{9k}\right)~,\nonumber\\
\left(\frac{4k-3}{8k}, \frac{3}{4k}, \frac{1}{4}, \frac{4k+3}{16k}\right)~,&&
\left(\frac{l+1}{2kl}, \frac{l-1}{2l}, \frac{1}{l}, \frac{2kl-l-1}{4kl}\right)~.
\end{align}

$S_{4,2}$ yields one infinite family:
\begin{align}
\left(  \frac{1}{2} \frac{4k + 2}{4k + 3},   \frac{1}{4k+3},   \frac{4k+2}{(2l+1)(2k+1)(4k+3)},   \frac{1}{2} \frac{ (4k+3)(2l+1)(2k+1)-4k-2}{(2l+1)(2k+1)(4k+3) }  \right)
\end{align}

$S_{4,5}$ yields the infinite families
\begin{align}
\left(\frac{3k-1}{12k-1}, \frac{3k+2}{12k-1}, \frac{6k-5}{12k-1}, \frac{9}{12k-1}\right)~,&&
\left(\frac{4k-1}{12k-1}, \frac{4k+1}{12k-1}, \frac{4k-3}{12k-1}, \frac{8}{12k-1}\right)~,\nonumber\\
\left(\frac{2kl-l-1}{4kl-1}, \frac{2l+1}{4kl-1}, \frac{2kl-k-1}{4kl-1}, \frac{2k+1}{4kl-1}\right)~.
\end{align}
Finally,  $S_{4,7}$ has the infinite family
\begin{align}
\left(\frac{3k+2}{12k+5}, \frac{6k+1}{12k+5}, \frac{3}{12k+5}, \frac{3k+1}{12k+5}\right)~.
\end{align}

\subsection{The sporadic cases}
The are also $418$ distinct charges that do not fit into the above infinite families:  $236$ for $n=3$; $154$ for $n=4$, and merely $28$ for $n=5$.  For obvious reasons we will not list these here, although the reader will find many in the examples given above.  Instead, we will include them in the ArXiv submission.  We will list the charges as follows:  for each distinct ordered set $(q_1,q_2,\ldots, q_n)$ we will write the $q_i = a_i/b$, where $b$ is a common denominator, and we will list the integers $(a_1,a_2,\ldots,a_n,b)$.  These charges are bounded by $q_i \ge \frac{1}{31}$.

\appendix

\section{Some a priori integer bounds}
In this appendix we solve some combinatoric inequalities that arise in the classification of skeletons and exponents with $c<6$.  These results are of course not new for $n\le 4$ --- see, e.g.~\cite{Xie:2015rpa}.  As we will see, for $n=5$ there are in fact just two more cases that need to be considered in addition to the $n\le 4$ results.

\subsection{A bound for $n=3$}
For $n=3$ we have the bound $\qtot > 1/2$, which from theorem~\ref{thm:chargebound} requires that the exponents $m_i$ satisfy
\begin{align}
\label{eq:bound3}
\frac{1}{2} < \M{1}+\M{2}+\M{3} ~.
\end{align}
We will show that the solutions to this inequality for integers $m_1\ge m_2 \ge m_3 \ge 2$ fall into one of the following classes:
\begin{align}
(m_1,6,5) 		&\qquad 7\ge m_1 \ge 6~, \nonumber\\
(m_1,5,5)		&\qquad 9\ge m_1 \ge 5~, \nonumber\\
(m_1,7,4)		&\qquad 9\ge m_1 \ge 7~,\nonumber\\
(m_1,6,4)		&\qquad 11\ge m_1 \ge 6~,\nonumber\\
(m_1,5,4)		&\qquad 19\ge m_1 \ge 5~,\nonumber\\
(m_1,4,4)		& \qquad \infty >m_1\ge 4~,\nonumber\\
(m_1,11,3)	&\qquad 13\ge m_1\ge 11~,\nonumber\\
(m_1,10,3)	&\qquad 14\ge m_1\ge 10~,\nonumber\\
(m_1,9,3)		&\qquad 17\ge m_1\ge 9~,\nonumber\\
(m_1,8,3)		&\qquad 23\ge m_1\ge 8~,\nonumber\\
(m_1,7,3)		&\qquad 41\ge m_1\ge 7~,\nonumber\\
(m_1,m_2,3)	& \qquad \infty >m_1\ge m_2,~m_2 \le 6~,\nonumber\\
(m_1,m_2,2)	&\qquad \infty >m_1\ge m_2 \ge 2~.
\end{align}
There are $5$ infinite families that depend on one free integer, $1$ infinite family that depends on two free integers, and $99$ sporadic solutions.

To demonstrate this, we proceed as follows.
\begin{enumerate}
\item $m_3  < 6$.  This follows from
\begin{align}
 \frac{1}{2} <  \M{1}+\M{2}+\M{3} \le \frac{3}{m_3}~.
\end{align}
\item $m_1 \ge m_2 \ge m_3 = 5$.  This can hold only if
\begin{align}
\frac{3}{10} < \M{1}+\M{2} \le \frac{2}{m_2}~,
\end{align}
i.e. only for $m_2 = 5$ or $m_2 = 6$.
\begin{enumerate}
\item  $m_1 \ge m_2 = 6$ requires $m_1 = 6$ or $m_1 = 7$~;
\item $m_1 \ge m_2 = 5$ requires $m_1 \in \{5,6,7,8,9\}$~.
\end{enumerate}
\item $m_1\ge m_2 \ge m_3 = 4$.  This can hold only if
\begin{align}
\frac{1}{4} < \M{1} + \M{2} \le \frac{2}{m_2}~,
\end{align}
i.e. only if $m_2 \in \{4,5,6,7\}$~.
\begin{enumerate}
\item $(m_1,7,4)$ satisfies bound for $m_1 \le 9$.
\item $(m_1,6,4)$ satisfies bound for $m_1 \le 11$.
\item $(m_1,5,4)$ satisfies bound for $m_1 \le 19$.
\item $(m_1,4,4)$ satisfies bound for every $m_1$.
\end{enumerate}

\item $m_1 \ge m_2 \ge m_3 =3$.
The bound is satisfied only if
\begin{align}
\frac{1}{6} <\M{1}+\M{2} \le  \frac{2}{m_2}~.
\end{align}
This is possible only for $m_2 \in\{3,\ldots,11\}$.
\begin{enumerate}
\item $(m_1,11,3)$ satisfies bound for $m_1\le 13$.
\item $(m_1,10,3)$ satisfies bound for $m_1 \le14$.
\item $(m_1,9,3)$ satisfies bound for $m_1 \le17$.
\item $(m_1,8,3)$ satisfies bound for $m_1 \le 23$.
\item $(m_1,7,3)$ satisfies bound for $m_1 \le 41$.
\item $(m_1,m_2,3)$ with $m_2 \le 6$ satisfies bound for every $m_1$  .
\end{enumerate}

\item $m_1 \ge m_2 \ge m_3 =2$.  The bound is satisfied for every $m_1,m_2$.

\end{enumerate}

\subsection{A bound for $n=4$}
With $n=4$ variables the bound of interest is
\begin{align}
1 < \M{1}+\M{2}+\M{3} +\M{4} ~.
\end{align}
We proceed in the same fashion as in the previous section and establish that integer solutions $m_1\ge m_2\ge m_3 \ge m_4 \ge 2$ to this inequality fall into one of the following classes:
\begin{align}
\label{eq:bound4}
(5,4,4,3)~,\nonumber\\
(4,4,4,3)~, \nonumber\\
(m_1,5,3,3)~, &\qquad 7\ge m_1 \ge 5~,\nonumber\\
(m_1,4,3,3)~,&\qquad 11 \ge m_1 \ge 4~,\nonumber\\
(m_1,3,3,3)~,&\qquad \infty > m_1\ge 3~,\nonumber\\
(m_1,m_2,m_3,2)~,&\qquad \M{1} + \M{2}+\M{3} > \frac{1}{2}~.
\end{align}
Thus, we have all the solutions to the previous problem with $m_4=2$,  one additional infinite family, and $13$ sporadic solutions.

\begin{enumerate}
\item $m_4 <4$ follows from
\begin{align}
1 < \M{1}+\M{2}+\M{3}+\M{4} \le \frac{4}{m_4}~.
\end{align}
\item $m_1 \ge m_2 \ge m_3 \ge m_4 = 3$ is satisfied only if
\begin{align}
\frac{2}{3} < \M{1} + \M{2} + \M{3} \le \frac{3}{m_3}~.
\end{align}
This requires $m_3 \le 4$.
\begin{enumerate}
\item $m_1 \ge m_2 \ge m_3=4$ is satisfied only if
\begin{align}
\frac{5}{12} < \M{1}+\M{2},
\end{align}
and the only solutions are $m_1 = 5$, $m_2 = 4$ or $m_1 = m_2 = 4$~.
\item $m_1 \ge m_2 \ge m_3 = 3$ is satisfied only if
\begin{align}
\frac{1}{3} < \M{1} + \M{2} \le \frac{2}{m_2}~.
\end{align}
This requires $m_2 \le 5$ and has solutions 
\begin{align}
7\ge m_1\ge m_2 & = 5,~&
11\ge m_1 \ge m_2 & = 4~,&
\infty > m_1 \ge m_2 & = 3~.
\end{align}
\item $m_1 \ge m_2 \ge m_3 \ge m_4 = 2$ is exactly the $n=3$ bound~(\ref{eq:bound3})~.
\end{enumerate}

\end{enumerate}

\subsection*{A bound for $n=5$}
Our last bound is
\begin{align}
\frac{3}{2} < \M{1}+\M{2}+\M{3} +\M{4} +\M{5}~.
\end{align}
The integer solutions with $m_1\ge m_2 \ge m_3 \ge m_4 \ge m_5 \ge 2$ are
\begin{align}
\label{eq:bound5}
(5,3,3,3,3)~,\nonumber\\
(4,3,3,3,3)~,\nonumber\\
(3,3,3,3,3)~, \nonumber\\
(m_1,m_2,m_3,m_4,2)~, &\qquad  1 < \M{1}+\M{2}+\M{3}+\M{4}~.
\end{align}
There are no additional infinite families and just three more sporadic cases.

This is obtained by following the same steps as in the previous two cases.  Clearly $m_5 < 4$, and for $m_5 =3 $ we obtain only the two listed cases.  For $m_5 =2$ we reduce the problem to the $n=4$ case.

We can now give a complete description of the infinite families for $n=5$:
\begin{align}
(m_1,3,3,3,2)~, &&
(m_1,4,4,2,2)~, &&
(m_1,6,3,2,2)~, &&
(m_1,5,3,2,2)~, \nonumber\\
(m_1,4,3,2,2)~, &&
(m_1,3,3,2,2)~, &&
(m_1,m_2,2,2,2)~.
\end{align}
This matches the $7$ infinite families that appear in Table 3 of~\cite{Xie:2015rpa}~.  The remaining possibilities constitute a finite list with $115$ entries.

\bibliographystyle{./utphys}
\bibliography{./bigref}

\end{document}